\tikzstyle{vertex}=[circle, minimum size=15pt,inner sep=0pt]
\tikzstyle{smallvertex}=[circle,minimum size=1pt,inner sep=0pt]
\tikzstyle{blackedge} = [draw,thick,-stealth']
\tikzstyle{priority1} = [draw,thick,-stealth', draw=black]
\tikzstyle{priority2} = [draw,thick,-stealth', draw=red, decoration={snake, amplitude=2pt, post length=4pt,
\tikzstyle{priority3} = [draw,thick,-stealth', draw=yellow, dotted]
\tikzstyle{priority4} = [draw,thick,-stealth', draw=green, ,decoration={snake, post length=3pt,
\tikzstyle{priority5} = [draw,thick,-stealth', draw=blue, ,decoration={coil,segment length=3.5pt, post length=3pt,
\newcommand{\N}{\ensuremath{\mathbb{N}}}
\renewcommand{\O}{\ensuremath{\mathcal{O}}}
\newtheorem{theorem}{Theorem}
\newtheorem{lemma}[theorem]{Lemma}
\newtheorem{conjecture}[theorem]{Conjecture}
\newtheorem{corollary}[theorem]{Corollary}
\newtheorem{observation}[theorem]{Observation}
\newtheorem{proposition}[theorem]{Proposition}
\definecolor{Martin}{rgb}{0,0.6,0.3}
\definecolor{Laura}{rgb}{0.5,0,0.5}
\definecolor{Robert}{rgb}{0,0,1}
\newcommand*{\Titel}{Nash equilibria in routing games with edge priorities}
\title{\Titel}
\author{Robert Scheffler\thanks{BTU Cottbus-Senftenberg, \texttt{robert.scheffler@b-tu.de}} \and Martin Strehler\thanks{BTU Cottbus-Senftenberg, \texttt{martin.strehler@b-tu.de}} \and Laura Vargas Koch\thanks{RWTH Aachen, \texttt{laura.vargas@oms.rwth-aachen.de}} }
\date{}
\begin{document}
\maketitle

\begin{abstract}
In this paper we present a new competitive packet routing model with edge priorities. We consider players that route selfishly through a network over time and try to reach their destinations as fast as possible. If the number of players who want to enter an edge at the same time exceeds the inflow capacity of this edge, edge priorities with respect to the preceding edge solve these conflicts. Our edge priorities are well-motivated by applications in traffic. For this class of games, we show the existence of equilibrium solutions for single-source-single-sink games and we analyze structural properties of these solutions. We present an algorithm that computes Nash equilibria and we prove bounds both on the Price of Stability and on the Price of Anarchy. Moreover, we introduce the new concept of a Price of Mistrust. Finally, we also study the relations to earliest arrival flows.

\end{abstract}

\section{Motivation}
\label{sec:motivation}
Routing games over time are widely studied \cite{cominetti2011existence, hoefer2009competitive, KS09} due to various applications, e.g., road and air traffic control, logistic in production systems, communication networks like the internet, and financial flows. Depending on applications, there is a huge range of models. There are games with non-atomic \cite{KS09} as well as atomic players \cite{hoefer2009competitive}, further there are games with continuous time \cite{hoefer2009competitive} as well as games with discrete time steps \cite{HPSV16}. 

Usually, players are allowed to enter an edge directly at their arrival time. An essential ingredient of these models is the priority rule for tie-breaking. If two or more players arrive at a node at the same time, one needs to decide which player may enter the subsequent link first. Some models maintain the overall arrival order with a first-in first-out rule \cite{cominetti2011existence, hoefer2009competitive, KS09}, i.e., only players with the exact same arrival time are re-arranged. Other models allow overtaking by higher prioritized players with global priority lists \cite{HPSV16,hoefer2009competitive} or local priority lists \cite{HPSV16} of players, that is, priority is a player inherent property and players are linearly ordered.

Yet, a very natural way to motivate priority has origin in traffic flows. At intersections of road networks, priority is usually determined by road signs. That is, there is one road with the way of right, whereas the traffic on the crossing road is prompted to give way by yield or stop signs.
This motivates a priority that depends on the chosen edges and not only on the players.

In this paper, we introduce a routing game over time on a network $G=(V,E)$ where priority on every edge is assigned to players depending on their preceding edge. Every edge $e=(v,w)$ has an ordering of the $v$ entering edges. That is, if two players arrive at a node $v\in V$ at the same time via two different edges $e_1\in E$ and $e_2\in 
E$, and both players are going to enter the same outgoing edge $e$, then the player from the edge that is first in the order of edge $e$, say $e_1$, always moves first and the 
player on the other edge, here $e_2$, always has to wait.  Moreover, if there arrives a third player in the subsequent time step at $v$ via edge $e_1$, then this player also gets 
priority and the player on edge $e_2$ has to wait another round. In other words, for this edge-based priority the first-in first-out (FIFO) property does not hold in nodes. 
Nevertheless, we assume that the first-in-first-out property holds on edges, i.e., players may queue up before traversing an intersection. 

\paragraph*{Our Contribution} In this paper, we focus on the symmetric game, i.e., there is only one source and one sink for all the players. 
We show that Nash equilibria are guaranteed to exist in these symmetric games and we present a tight bound on the Price of Anarchy as well as a significant gap between the best Nash equilibrium and the social optimal state. 
 Surprisingly, even in the best Nash equilibrium, player strategies may include the use of cycles. We even show that some players may have to visit certain nodes in the network up to ${\cal O}(k)$-times in a network with $k$ players and $m\in {\cal O}(k)$ edges.

Furthermore, we introduce the new concept of a Price of Mistrust. Usually, the Price of Anarchy (PoA), i.e., the ratio between worst equilibrium and system optimum, is obtained in somewhat strange constellations, e.g., two players are blocking each other. On the other hand, achieving the Price of Stability (PoS), i.e., the ratio between best equilibrium and system optimum, seems to require at least some co-operation or friendly players. Since we consider a routing game over time, players start one after another, and we assume that each player fixes its strategy just when actually starting in the source node. We assume all strategies of players that started earlier to be fixed and known and each player can react on the choices of the preceding players. Hence, only the strategies of subsequent players are unknown. Here, a player mistrusts the following players and chooses the best response that can be delayed by as few of the subsequent players as possible. We show that this concept yields equilibria with values proper between PoS and PoA. Furthermore, we present an algorithm to compute mistrustful equilibria for symmetric routing games.

The paper is organized as follows. In Section~\ref{sec:lit}, we discuss related results from literature and categorize our problem. Section~\ref{sec:model} gives a formal problem definition and fixes the terminology. Afterwards, we present basic results on Nash Equilibria in our model in Section~\ref{sec:results}. In Section~\ref{sec:alg} we present an algorithm that computes such equilibria and introduce the Price of Mistrust. In Section~\ref{sec:eaf} we show the connection to earliest arrival flows. Finally, we give a short outlook on some extensions and close the paper with a discussion in Section~\ref{sec:disc}.

\section{Pointers to the Literature}
\label{sec:lit}
Routing games are a widely used approach to model traffic situations and there are a lot of different variants. Rosenthal~\cite{rosenthal1973network} introduced atomic selfish routing games in 1973. Several work on the existence and efficiency of equilibria in this kind of games has been done~\cite{anshelevich2008price, awerbuch2005price, christodoulou2005price, roughgarden2002bad}. For a profound survey by Roughgarden see Chapter~18 in~\cite{nisan2007algorithmic}. 

However, these routing games have the drawback to be static. Hoefer et al.~\cite{hoefer2009competitive} introduced temporal network congestion games, where players block each other only when they enter or use an edge at the same moment in time. They analyze the existence of equilibria as well as best response dynamics for several priority rules. Harks et al.~\cite{HPSV16} proved bounds on the price of anarchy and stability for competitive packet routing games where players have a global or local ranking. In contrast to the player dependent forwarding/priority rules in the work just mentioned, Cao et al.~\cite{cao2017network} use  FIFO policy with an edge dependent tie-breaking. They focus on the comparison between Nash equilibria and equilibria in a game with full control (subgame perfect equilibria) on acyclic graphs. In the paper at hand, we will use the subordinated tie-breaking of Cao et al.~\cite{cao2017network} as our main criterion in a non FIFO setting similar to the model in Harks et al.~\cite{HPSV16}.  

While Roughgarden's model is a game theoretical but static model, Ford and Fulkerson introduced a dynamic model for \emph{Flows over time}, also called \emph{dynamic network flows}, in their seminal works~\cite{FF58,FF62}. The authors used time-expanded networks to keep track of the dynamic movement of flow particles and compute system optimal solutions.  A very fascinating concept in this context is the \emph{earliest arrival flow} (EAF). Given a network, an earliest arrival flow simultaneously maximizes the amount of flow that has already reached the sink for all points in time. While Gale~\cite{gale59} has shown that earliest arrival flows always exist in single source single sink networks and it is even possible to derive a fully-polynomial time approximation scheme in the multiple source single sink case~\cite{FS07}, earliest arrival flows do not necessarily exist in a network with multiple sinks~\cite{BaumannSkutella}. An excellent introduction to dynamic flows is given by Skutella~\cite{Skutella2009}.
A game theoretical extension of this model was given by Koch and Skutella~\cite{KS09,KS11}, who presented a characterization of Nash equilibria for flows over time. However, the results were obtained for non-atomic players, i.e., flow is arbitrarily splittable, while we consider atomic players in our work.

Independently, similar settings were investigated from a more application-driven perspective. In the traffic community, Yagar~\cite{yagar} introduced \emph{dynamic traffic assignment} (DTA) already in the early 1970s. Nowadays, DTA includes a wide variety of problems and concepts~\cite{dtaprimer}, e.g., also departure times are subject to optimization. In this context, \emph{dynamic network loading}~\cite{WuChenFlorian} addresses the non-trivial task of computing a single dynamic assignment based on load-dependent travel time functions. Yet, due to the high complexity of urban traffic networks with traffic signals et cetera, most approaches rely on heuristics or simulation-like techniques or they restrict the problem a-priori, e.g., by drastically limiting the set of feasible paths. 

\section{A Model for Routing with Edge Priorities}
\label{sec:model}

The object under consideration in this paper is a \emph{routing game with edge priorities} which we define as follows. 

\subsection{Playing field and rules}

The routing game is played on a directed network $G=(V, E)$, where $V$ is the node set with $n=|V|$ nodes and $E$ is the edge multiset with $m=|E|$ arcs. We allow multiple edges, i.e., more than one edge starting at the same node and ending at the same node, as well as loops, i.e., an edge starting and ending at the same node.
Further, $u:E\to \N$ are \emph{integral capacities} on the edges. This capacity limits the inflow rate, i.e., the amount of flow entering an edge $e\in E$ per time unit. Furthermore, the edges of $G$ are equipped with \emph{integral transit times} $\tau :E\to\N_0$. The transit time or costs $\tau(e)$ denotes the time a player needs to traverse edge $e\in E$. We use constant transit times here, that is, players use edges independently of other players and there is no delay due to congestion.
We allow edges with transit time zero, since they are useful to model capacities on nodes. The throughput capacity of a node $v$ can be limited by replacing this node by an edge $(v',v'')$ with the desired capacity and costs zero. Similar to the work of Harks et al.~\cite{HPSV16}, we restrict to integral transit times, since each player blocks exactly one capacity unit for one time unit on each edge of its path. 

Additionally, each routing game fixes two distinguished nodes, a source $s$ and a sink $t$. We assume w.l.o.g.\ that the source has no incoming edges, i.e., $\operatorname{deg^-}(s)=0$. Every \emph{player} in the set $N=\{1, \dots , k\}$ of players chooses a path $P_i$ from the set of $s$-$t$-paths $\mathcal{P}_{st}$ and travels over the time through the network. To be more specific, a player can leave an edge $e$ at the earliest $\tau(e)$ time units after entering $e$. Moreover, we only consider discrete time steps, since we have integral transit times. Please note, that we do not restrict to simple paths, since it may be advantageous for a player to choose a path containing a cycle as we will see in the upcoming analysis.
Since all the players have the same set of strategies we call it a \emph{symmetric} or \emph{single-commodity} game. 

When traveling through the network over time, it might happen that more than $u(e)$ players try to enter an edge $e=(v,w)$ at the same time. To decide which players are allowed to proceed directly and which players need to wait at least one time unit, we define a \emph{priority order} $\pi(e)=(e_1,e_2,\dots,e_{\operatorname{deg^-}(v)})$, i.e., for every $v\in V$ we assign an ordered list $\pi(e)$ of all incoming edges $e_i=(u_i,v)\in\delta^-(v)$ of $v$ to each outgoing edge $(v,w)\in\delta^+(v)$ of $v$. If edge $e$ has remaining capacity at time $T$, a player seeking to enter edge $e$ at time $T$ may do so, if the incoming edge of this player has the lowest possible index in the ordered list $\pi(e)$ among all players who want to enter the link $e$. This applies iteratively. Thus, after the first player has entered the link, we choose the next player with the lowest possible incoming edge in $\pi(e)$ from the remaining players, if $e$ still has capacity left.

Among the players waiting on an edge $e'$ the \emph{first-in first-out} rule (FIFO) applies. This means, if Player~$i$ and Player~$j$ both try to enter edge $e$ from the same edge $e'$, the player who arrived on edge $e'$ first will be preferred. If several players have entered $e'$ at the same time, but the desired edge $e$ does not provide enough capacity for all of them, we use the number of the player as a global tie-breaker. That is, in case of a tie, Player~$i$ moves before Player~$j$ if $i<j$. Especially, this rule applies at the source, i.e., Player~1 is always the first player to leave source $s$.

If for every node $v \in V$ it holds $\pi (e)=\pi (e')$  for all priority lists of outgoing edges $e, e' \in \delta^+(v)$, then we call a game \emph{global} and else \emph{local}. In case of a global game, we may simply define a total order on all edges, since such an order canonically defines the priority list of each edge. This total order is of course not unique. For the sake of simplicity in a global game, we always relabel the edges of $G$ to $e_1,e_2,\dots,e_m$  such that $e_i$ has higher priority than $e_j$ whenever $i<j$.

Summarizing, we determine priority in the order \emph{edge list} $>$ \emph{FIFO} $>$ \emph{player ID}. All together, a \emph{routing game with edge priorities} is defined as a tuple $(G, u,  \tau, N, \pi)$.

\subsection{Goal of the game}

In the game, a \emph{strategy} of a Player $i$ is an $s$-$t$ path $P_i$. Let $P$ be the \emph{profile} or \emph{state} of the game with the strategies of all the players, that is, $P$ consists of $k$ paths $P_1$ to $P_k$. 
Now, we denote the \emph{arrival time} of Player~$i$ as $C_i(P)$, which consists of the transit times $\tau(e)$ of all the edges of the chosen path $P_i$ and the \emph{waiting time} on those edges.
Obviously, the former is independent of the strategies of the other players due to constant transit times, but the latter significantly depends on $P$. 

A profile $P$ is \emph{socially optimal} if it minimizes the total costs $C(P)=\sum_{i \in N} C_i(P)$.
However, we assume players to behave \emph{selfishly}, i.e., each player aims to minimize its own arrival time. We call a state a (pure) Nash-equilibrium (PNE) if the chosen strategies separately minimize the players' costs for every player. Let $P_{-i}$ be the state $P$ without the strategy of Player~$i$. Furthermore, with $P_i',P_{-i}$ we denote replacing the strategy of Player~$i$ in $P$ by $P_i'$.  More formally, a routing game with state $P$ and strategy $P_i$ for Player~$i$ is in a PNE if
$ C_i(P_i, P_{-i}) \leq C_i(P_i', P_{-i}) \quad \forall P_i' \in \mathcal{P}_{s t}$, for all players $i \in N$.
In other words, no player can reduce its costs by switching from $P_i$ to another path~$P_i'$.

\subsection{Making the game well-defined}\label{sec:zerocycles}

For a well-defined game, it is important to have a unique mapping of a strategy profile $P$ to costs $C(P)$ of the players. Unfortunately, the current model can still lead to some paradoxical situations in connection with zero costs cycles. 
An example with two players is given in Figure~\ref{fig:zerocycle}. Assume, Player~1 chooses the path $(s,v_1,v_3,v_2,v_4,t)$ and Player~2 chooses $(s,v_2,v_4,v_1,v_3,t)$. Note that the paths intersect twice. 
Player~$i$ hits node $v_i$ before node $v_{3-i}$ on the respective path. 

Now, assume the red wavy edges $(v_4,v_1)$ and $(v_3,v_2)$ in the cycle have priority over the black straight entering edges $(s_1,v_1)$ and $(s_2,v_2)$ to proceed on $(v_1,v_3)$ and $(v_2,v_4)$, respectively. Furthermore, all edges have zero transit time. On the one hand, Player~1 could reach $v_2$ in zero time and block Player~2 there. If Player~2 is blocked at $v_2$, it cannot block Player~1 at $v_1$. Thus, Player~1 reaches $t_1$ at time 0, and Player~2 reaches $t_2$ at time 1. On the other hand,  the same argument is valid vice versa. Since Player~2 can reach $v_1$ in zero time, Player~1 is blocked. Hence, Player~2 reaches $t_2$ at time 0, and Player~1 reaches $t_1$ at time 1. In other words, there is no unique embedding of the paths and therefore there is no unique mapping from the strategy $P$ to arrival times of players $C(P)$.

\begin{figure}[ht!]
\centering
\begin{tikzpicture}[every node/.style={circle, minimum size=15pt, inner sep=2pt, font = \footnotesize}]
\node[draw] (s1) at (0,0) {$s$};
\node[draw] (t2) at (0:4) {$t$ };
\node[draw] (v1) at (0:1) {$v_1$};
\node[draw] (v3) at (90:1) {$v_3$};
\node[draw] (v2) at (180:1) {$v_2$};
\node[draw] (v4) at (270:1) {$v_4$};

\draw[priority1] (s1) to (v1);
\draw[priority1] (v4) to (t2);
\draw[priority1] (v3) to (t2);
\draw[priority1] (s1) to (v2);

\draw[priority1] (v1) to (v3);
\draw[priority2] (v3) to (v2);
\draw[priority1] (v2) to (v4);
\draw[priority2] (v4) to (v1);
\end{tikzpicture}
\caption{Example for embedding problems with zero costs cycles. Red wavy edges have higher priority.}\label{fig:zerocycle}
\end{figure}
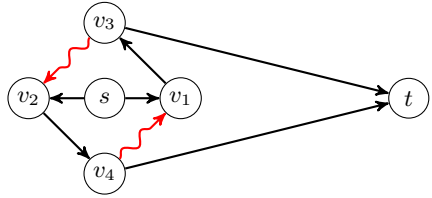

Since we do not want to forbid zero transit times in general, we exclude all networks with directed zero costs cycles from our consideration. Moreover, we will compute ratios of various solutions, e.g., the Price of Anarchy. To avoid division by zero, we additionally exclude all games where source $s$ and sink $t$ have distance zero.

\section{Basic Results on Equilibria}
\label{sec:results}
In this section, we highlight some major properties of equilibria in routing games with edge priorities. We show that a PNE always exists in every symmetric game and we bound its costs with respect to the social optimum.

\subsection{Existence and Uniqueness}

First, we claim existence of a pure-strategy Nash equilibrium (PNE) in a symmetric game. However, a proof will be given by Theorem~\ref{theo:algo:pathfinder} in Section~\ref{sec:alg}, where we prove the correctness of an algorithm computing equilibria.

\begin{theorem}
\label{theo:existence}
In every symmetric routing game with edge priorities there exists a PNE.
\end{theorem}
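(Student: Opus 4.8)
The plan is to prove existence constructively, by building a PNE one player at a time and then arguing that the resulting profile is stable; this is precisely the route that Theorem~\ref{theo:algo:pathfinder} will take, so here I only lay out the conceptual skeleton. First I would make sure the game is well-defined, in the sense that for a fixed profile $P$ the arrival times $C_i(P)$ are uniquely determined. Given a profile, the motion of all players is a deterministic discrete-time simulation: at every node and every time step the rule \emph{edge list} $>$ \emph{FIFO} $>$ \emph{player ID} resolves each conflict without ambiguity. The only way this simulation could become ill-defined is through directed zero-cost cycles (cf.\ Figure~\ref{fig:zerocycle}), and these were explicitly excluded in Section~\ref{sec:zerocycles}. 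Hence $C(P)$ is a genuine function of $P$, and it is meaningful to speak of a best response.

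Next comes the construction. I would process the players in the order of their IDs $1,2,\dots,k$, mirroring the fact that they leave the source sequentially (Player~$1$ always goes first). Player~$1$, alone in the network, simply takes a shortest $s$-$t$ path and cannot be delayed, so its choice is optimal. Inductively, when Player~$i$ enters, the paths $P_1,\dots,P_{i-1}$ are already fixed and known; Player~$i$ then selects a best response to this partial profile. The natural way to compute it is a shortest-path search in a time-expanded network in which the blockings imposed by the already-placed players appear as extra waiting costs, so that the ``fastest path through the occupied network'' can be found in (pseudo-)polynomial time. This is the role of the \emph{PathFinder} routine.

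The hard part, and really the crux of the theorem, is to show that the profile obtained at the end is in fact a PNE, i.e.\ that no \emph{earlier} player regrets its choice once the \emph{later} players have entered. This is genuinely nontrivial in this model: because priority is edge-based rather than purely ID-based, a later player reaching a merge node from a higher-priority incoming edge can delay an earlier player, so a plain ``best response to the past'' need not survive the arrival of the future. The remedy is the mistrust idea: Player~$i$ should not pick an arbitrary best response but one that is maximally robust against the subsequent players, informally a fastest path that can be delayed by as few later players as possible. I would then isolate the key stability lemma, namely that (i) such a robust best response always exists, and (ii) the delays later players can inflict never turn an already-committed robust best response into a non-best-response. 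For (ii) a monotonicity-style argument is needed: intuitively, whatever a later player could do to slow Player~$i$ down on its chosen path it could do equally on any alternative, so no deviation becomes profitable after the fact. Once this lemma holds, every player is simultaneously playing a best response in the final profile, which is exactly the PNE condition, and existence follows.

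I expect part~(ii) of the stability lemma to be the main obstacle. One must control how the edge-priority conflicts propagate backwards through the player ordering and argue that they can never open up a profitable deviation for an already-committed player; the robustness built into the mistrustful choice is what should prevent this. The well-definedness established in the first step, together with the exclusion of zero-cost cycles, is what keeps this propagation finite and the whole argument tractable.
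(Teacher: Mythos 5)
Your overall route coincides with the paper's: Theorem~\ref{theo:existence} is indeed proved constructively via Theorem~\ref{theo:algo:pathfinder}, by inserting players one at a time, each playing a best response that is robust against all later players (a \emph{mistrustful} best response), and your identification of the crux --- that edge-based priority lets later players delay earlier ones, so a plain best response to the past does not suffice --- is exactly right. However, both places where you commit to a concrete argument have genuine problems. First, even Player~1 cannot ``simply take a shortest $s$-$t$ path and not be delayed'': in the $4$-Braess graph of Figure~\ref{fig:pnenotunique}, if Player~1 picks the second horizontal path, Player~2 can reach the second node of that path at the same time via the prioritized wavy edge and push Player~1 back by one time unit. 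Already the first player must choose among shortest paths mistrustfully; the paper acknowledges this in the PoA proof by insisting on ``the shortest path with highest possible priority.''

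Second, and more importantly, the ``monotonicity-style argument'' you propose for part~(ii) of your stability lemma --- that whatever a later player could do to slow Player~$i$ on its chosen path it could do equally on any alternative --- is false in this model. Susceptibility to later players is precisely path-dependent; that is the entire reason mistrust matters and why the zigzag path dominates the horizontal paths in Figure~\ref{fig:pnenotunique}. The paper never uses such a transfer argument; instead it proves something stronger about the constructed path: it admits \emph{no} delay from later players whatsoever. This rests on two invariants of the backward search: (a)~Player~$i$ enters every edge of its path at the earliest feasible time, so by FIFO no later player can ever precede it on the same incoming edge; and (b)~at every merge node the incoming edge is chosen with the highest priority among all edges $e'$ satisfying $\varepsilon(e')\le\Delta(v)$, and since the labels $d$ and $\varepsilon$ only increase as players are inserted, no later player can ever arrive at the merge in time on a higher-priority edge. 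Stability of the final profile then needs one further step you gloss over: a profitable deviation of Player~$i$ would have to displace some earlier player $j<i$, which is impossible because Player~$j$'s path enjoys the same two invariants. Without (a), (b), and this displacement argument, your lemma~(ii) is not merely unproven --- the strategy you sketch for it would fail.
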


An equilibrium in terms of the profile is not unique. But for a given game, also the value $C(P)$ of equilibria is not unique.  

\begin{proposition}\label{prob:notunique}
The value of equilibria in a routing game with edge priorities is not unique.
\end{proposition}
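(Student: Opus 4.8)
The plan is to prove the proposition by exhibiting a single explicit instance $(G,u,\tau,N,\pi)$ that admits two pure Nash equilibria $P$ and $P'$ with $C(P)\neq C(P')$. Since the claim is purely existential, one well-chosen example suffices; the work lies entirely in designing the instance and in verifying the equilibrium property of both profiles.

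The design principle I would follow exploits exactly the feature that distinguishes this model, namely that priority on an edge is determined by the preceding edge rather than by a global player order. Concretely, I would build the network around a capacity-one bottleneck edge $c=(m,t)$ that every player must cross, so that the multiset of completion times is governed by the order in which players occupy $c$, and I would let this order be decided by $\pi(c)$, i.e.\ by which incoming edge of $m$ each player arrives on. Players then choose, through their path, the incoming edge of $m$ and hence their own priority at $c$, and different collective choices induce different blocking cascades. The essential subtlety is that in the symmetric game every player shares the same strategy set $\mathcal{P}_{st}$, so any fast ``escape'' route is available to all players at once; to make a costly blocking pattern stable I would therefore include enough players together with tight capacities so that all cheap routes are saturated in both profiles, and I would use the edge priorities to lock an inefficient ordering into place. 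I expect the smallest working instance to need at least three players and possibly a path that traverses a cycle once, in line with the cyclic behaviour the paper announces for equilibria.

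The verification would then proceed along the following lines. First I would check that the instance is admissible in the sense of Section~\ref{sec:zerocycles}, i.e.\ that it contains no directed zero-cost cycle and that $\operatorname{dist}(s,t)>0$. Second, for each of $P$ and $P'$ I would simulate the deterministic dynamics step by step, resolving every conflict by the rule \emph{edge list} $>$ \emph{FIFO} $>$ \emph{player ID}, to read off the arrival times $C_i(P)$ and $C_i(P')$ and confirm that $C(P)\neq C(P')$. Third, for every player $i$ and every alternative path $P_i'\in\mathcal{P}_{st}$ I would argue that rerouting does not decrease $C_i$ while the others are held fixed, so that both profiles satisfy the Nash condition.

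The main obstacle is this last step for the more expensive profile. Because the strategy set is shared, the natural worry is that a delayed player can simply divert onto whichever route currently yields a faster completion; ruling this out requires showing that, given the fixed positions of the remaining players, every such diversion is either blocked by the priority order at $c$ or arrives no earlier, and this check must include non-simple paths, since detours through a cycle are legal strategies and could in principle let a player re-enter $m$ on a higher-priority edge. Tuning the capacities, transit times and the single list $\pi(c)$ so that they simultaneously (i) force contention in both profiles, (ii) make the two induced orderings differ in \emph{total} delay, and (iii) leave no profitable unilateral deviation in either profile is the delicate part of the construction. As an alternative route to the same conclusion, I note that non-uniqueness of the value also follows a posteriori from the later separation between the Price of Stability and the Price of Anarchy, whose witnessing instances already contain two equilibria of different cost.
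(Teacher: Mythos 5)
There is a genuine gap: your text is a design plan, not a proof. The proposition is existential, so the entire mathematical content consists of exhibiting a concrete instance $(G,u,\tau,N,\pi)$, computing the costs of two profiles, and verifying the Nash condition for both; all three steps are deferred in your write-up (``tuning \dots is the delicate part''), so nothing is actually established. Worse, the central design principle you commit to is self-defeating. If every player must cross a single capacity-one bottleneck $c=(m,t)$, then at most one player enters $c$ per time step, so all arrival times at $t$ are distinct. But in any PNE the first player arrives at the shortest-path time $\ell$ and consecutive players arrive within one time unit of each other (this is exactly the paper's Lemma~\ref{lemma:zeitpunkte}, whose follow-the-leader argument does not depend on the present proposition); distinctness then forces the arrival times in \emph{every} PNE to be exactly $\ell,\ell+1,\dots,\ell+k-1$, hence every PNE has the same value $k\ell+\frac{k(k-1)}{2}$. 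So a universal bottleneck is precisely the structure that makes the equilibrium value unique. Non-uniqueness of the value requires equilibria that exploit \emph{different degrees of parallelism} at the sink, which is what the paper's construction does: in the $b$-Braess graph with $k=b$ players, unit capacities, cost~1 on the $s$-leaving edges, cost~0 elsewhere, and the zigzag edges prioritized, one PNE routes the players on the $k$ disjoint parallel paths (everyone arrives at time $1$, total cost $k$), while another PNE has all players queue along the single zigzag path (arrival times $1,2,\dots,k$, total cost $\frac{k(k+1)}{2}$).

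Your fallback argument is also not available. Within the paper's logical structure it is circular: the tight PoA instance of Theorem~\ref{theo:poa} \emph{is} the example from this proposition's proof, and the PoS instance of Theorem~\ref{theo:pos} has an essentially unique equilibrium value, so it witnesses a gap between equilibria and the (unstable) social optimum, not between two equilibria. More generally, an upper bound on the PoA and a lower bound on the PoS proved on \emph{different} instances never combine to give a single instance carrying two equilibria of different cost, which is what the proposition asserts.
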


\begin{proof}
We use a graph based on the famous example of Braess~\cite{Braess}. We refer to the graph with $b$ parallel paths from $s$ to $t$ as \emph{$b$-Braess graph}. Each path consists of three edges. Furthermore, edges  connect the third node of the $i$-th path with the second node of the $(i+1)$-th path for all $1\le i<b$. A $4$-Braess graph is depicted in Figure \ref{fig:pnenotunique}. Consider the $b$-Braess graph where all $s$-leaving edges have transit time one and all the other edges have costs zero, while all edges have unit capacity. In this network, priority follows the scheme depicted in Figure~\ref{fig:pnenotunique}, that is, the red wavy edges connecting the parallel paths are always prioritized over the edges in the direct paths.

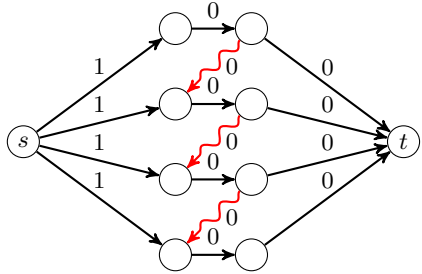
\begin{figure}[ht!]
\centering
\begin{tikzpicture}[every node/.style={circle, minimum size=12pt, inner sep=2pt, font = \footnotesize}]
\node[draw] (s) at (0,1.5) {$s$};
\node[draw] (v0) at (2,3) {$ $};
\node[draw] (v1) at (3,3) {$ $};
\node[draw] (v2) at (2,2) {$ $};
\node[draw] (v3) at (3,2) {$ $};
\node[draw] (v4) at (2,1) {$ $};
\node[draw] (v5) at (3,1) {$ $};
\node[draw] (v6) at (2,0) {$ $};
\node[draw] (v7) at (3,0) {$ $};
\node[draw] (t) at (5,1.5) {$t$};

\draw[priority1] (s) to node[above] {1} (v0); 
\draw[priority1] (s) to node[above] {1} (v2); 
\draw[priority1] (s) to node[above] {1} (v4); 
\draw[priority1] (s) to node[above] {1} (v6); 

\draw[priority2] (v1) to node[right] {0} (v2); 
\draw[priority2] (v3) to node[right] {0} (v4); 
\draw[priority2] (v5) to node[right] {0} (v6); 

\draw[priority1] (v0) to node[above] {0} (v1);
\draw[priority1] (v2) to node[above] {0} (v3); 
\draw[priority1] (v4) to node[above] {0} (v5); 
\draw[priority1] (v6) to node[above] {0} (v7); 

\draw[priority1] (v1) to node[above] {0} (t); 
\draw[priority1] (v3) to node[above] {0} (t); 
\draw[priority1] (v5) to node[above] {0} (t); 
\draw[priority1] (v7) to node[above] {0} (t); 
\end{tikzpicture}
\caption{The 4-Braess graph, where the value of a PNE is not unique. Red wavy edges have higher priority.}\label{fig:pnenotunique}
\end{figure}

We study the game with $k=b$ players. Obviously, using the $b$ parallel paths in this network is a PNE. No player can improve its arrival time by switching to another path. 
This PNE is also socially optimal, yielding total costs $C(P)=k$.

However, it is also a PNE for all players to go along the zigzag-path, i.e., every player uses all the red wavy edges, in order of the player IDs. If Player~1 chooses this path, its arrival time is still~1. Player~2 cannot arrive at time~1 with this choice of Player~1, so Player~2 can decide to follow Player~1 arriving at time~2. The same argument applies to all other players. In this case, Player~$i$ has costs $i$ and no improving move. In total $C(P)=\frac{k(k+1)}{2}$.
\end{proof}

This result also holds if we forbid edges with costs equal to zero. Change the transit times in the $b$-Braess graph as follows. The outgoing edges of $s$ get the costs $1, 3, \ldots, 2b - 1$ from top to bottom. Contrary, the ingoing edges of $t$ get these costs from bottom to top. Each other edge is assigned transit time 1. Now, the parallel paths have the same costs like the zigzag-path $c(P) = 2b + 1$. Thus, the same argument as above applies.

\subsection{Structural properties}

The previous result was to be expected, since the costs of a PNE are also not unique in many other routing games with different priority settings or in equilibrium flows with capacities. More surprising, in routing games with edge priorities, a player may visit a node arbitrary often. Furthermore, there are networks where this can happen in every PNE. Before we come to this statement, we start with the following lemma.

\begin{lemma}\label{lemma:zeitpunkte}
In a PNE in a symmetric routing game with edge priorities, the starting order of the players is maintained at the sink $t$. Further, for every $1 \leq i \leq k-1$ Player $i+1$ arrives at $t$ at most one time unit after the preceding Player $i$. 
\end{lemma}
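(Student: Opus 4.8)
The plan is to derive both assertions from the equilibrium condition by letting a player imitate the route of its neighbour in the ID order and then bounding the arrival time of the imitating player against that of the imitated one. Throughout I use that the priority order on an edge $e=(v,w)$ is fixed \emph{first} by the incoming edge (through $\pi(e)$) and only afterwards by FIFO on that edge and by the player ID; in particular, two players travelling along the same path belong at every shared transition to the same priority class, inside which the one with the smaller ID is served first. I will call a player \emph{dominant} over another if it has the smaller ID: a dominant player leaves $s$ first and, along any commonly used edge, never yields.

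For the order-preservation part I would argue by contradiction via a single best response. Suppose that in the PNE some Player $i+1$ reaches $t$ strictly before Player $i$. Consider the deviation in which Player $i$ copies the path $P_{i+1}$. Since Player $i$ is dominant over Player $i+1$, along the shared route it is served no later than Player $i+1$ was, so in the deviated state it reaches $t$ no later than Player $i+1$ does there. The key sub-claim is that this arrival time is at most the original value $C_{i+1}(P)$, i.e.\ the dominant deviator can always realise (at least) the time-trajectory that Player $i+1$ followed in the original profile. Granting this, the deviation gives Player $i$ a strictly smaller arrival time than $C_i(P)$, contradicting the equilibrium condition. Hence $C_i(P)\le C_{i+1}(P)$ for every $i$, and by transitivity $C_1(P)\le\dots\le C_k(P)$, which is the claimed preservation of the starting order.

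For the one-step bound I would use the symmetric deviation: let Player $i+1$ copy $P_i$ and show that it then arrives at most one time unit after Player $i$. The heart is a \emph{follow-the-leader} induction along $P_i$: assuming Player $i+1$ enters the current edge $e'$ at most one step after Player $i$, I want the same for the next edge $e$. Because both come from $e'$, they lie in the same priority class of $\pi(e)$ with Player $i$ immediately ahead of Player $i+1$ (Player $i$ entered $e'$ no later, and ties break by the smaller ID); one then expects that whenever Player $i$ may enter $e$, Player $i+1$ is admitted at most one step later, so the gap stays bounded by one. Iterating to the last edge yields arrival at $t$ within one time unit of Player $i$, and the equilibrium condition forces $C_{i+1}(P)\le C_i(P)+1$.

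The main obstacle in both parts is that a deviation changes the whole profile, so I must control the interference of the remaining players; this is exactly the content of the two sub-claims above. For the bound it must be ruled out that players arriving from \emph{other} incoming edges sit between the leader and the follower in the priority order and occupy $e$ for several consecutive steps, which would widen the gap beyond one unit. I would resolve both points by a monotonicity analysis of the priority dynamics: inserting a dominant player onto a path only pushes lower-priority players back and never pulls a competitor forward into the deviator's way, while removing the follower from its old route only frees capacity. Tracking the leader and the follower simultaneously, one then argues that the ``slot'' vacated by the leader on each edge is available to the follower one step later. Making this monotonicity precise, so that no cascade of freed or occupied capacity can enlarge the gap or delay the reference player, is the crux; the remaining steps are bookkeeping over the edges of the path together with the tie-breaking conventions at the source.
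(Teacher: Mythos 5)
Your first part (order preservation) is essentially the paper's argument: the earlier-starting Player $i$ copies the strategy of the later-starting but earlier-arriving player and, by FIFO and ID dominance along a common path, stays ahead of it; the paper asserts the same sub-claim you flag, at the same level of detail. The genuine gap is in the second part. Your follow-the-leader induction rests on the step ``whenever Player $i$ may enter $e$, Player $i+1$ is admitted at most one step later,'' and this step is false in this model precisely because FIFO does \emph{not} hold at nodes: a stream of players arriving at $v$ at times $T_0+1, T_0+2,\dots$ via an incoming edge $e''$ that precedes $e'$ in $\pi(e)$ overtakes the follower round after round, so the gap between leader and follower can grow without bound. Your proposed fix --- monotonicity of the dynamics under inserting the deviator on the new path and removing it from the old one --- does not address this, because these blockers need not be created by the deviation at all: they can occupy the same edges at the same times in the original equilibrium profile. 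There, the leader passed $v$ just before they arrived, while the follower, one step later, meets them head-on; there is no ``pulling forward'' of competitors for a monotonicity argument to rule out.

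The paper's proof closes exactly this hole with an adaptive construction that your proposal lacks: the deviating Player $i+1$ does not insist on following Player $i$. Whenever it would be blocked at a node $v$ by some Player $j$ entering $e$ with higher priority, it branches on $j$'s ID. If $j>i+1$, Player $i+1$ instead copies $j$'s strategy from $s$ up to $v$; having the smaller ID it occupies $j$'s slot on the higher-priority incoming edge (FIFO plus ID tie-breaking on identical prefixes), so the blocking cannot occur, and it resumes following Player $i$ from $v$ on. If $j\le i$, Player $i+1$ abandons Player $i$ and follows Player $j$ instead, whose arrival time is at most $T$ --- here the paper chooses $i$ as the \emph{smallest} index with a gap exceeding one (equivalently, one can invoke your part 1) so that every Player $j\le i$ arrives by time $T$. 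Finally, since the network has no zero-cost cycles, each switch advances time, so this process terminates with a finite walk reaching $t$ by time $T+1$, contradicting the PNE property. Without this prefix-copying/target-switching idea your induction step cannot be repaired by bookkeeping alone.
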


\begin{proof}
Assume for the sake of contradiction, Player $i$ starts before Player $j$, i.e., $i<j$, but Player $i$ reaches the sink $t$ after Player $j$. Then $i$ can copy the strategy of $j$. Since Player $i$ is allowed to start before Player $j$, $i$ will always be before $j$, improving the arrival time in $t$. Hence, the initial state was not a PNE.

Furthermore, assume that there is a time gap of at least two time units between two arriving players. Choose the smallest index, such that Player $i+1$ arrives more than one time unit later than Player $i$. Note that all Players $j$ with $j\le i$ also arrive at $t$ at time $T$ at the latest.  
In the following, we construct a path $P_{i+1}'$ for Player $i+1$, which realizes an arrival time of at least $T+1$ where $T$ is the arrival time of Player $i$. The basic idea is to follow the preceding players, i.e., to use the same edges as the tracked player exactly one time step later. 

Player $i+1$ starts with following Player $i$. If it can follow this player until it reaches the sink, Player $i+1$ arrive at time $T+1$. Thus, assume it cannot follow Player $i$ since Player $i+1$ is blocked at some node $v$ by Player $j$, who has the right to enter the next edge $e=(v,w)$ first. If $j>i+1$, Player $I+1$ can copy the the strategy of $j$ from $s$ to $v$ and prevent the blocking. From node $v$ on it continues following Player $i$. If $j<i$, it simply follows $j$ from now on. 

We iterate this idea, i.e., whenever Player $i+1$ is blocked, Player $i+1$ either copies the first part of the strategy of the blocking Player $j$ for $j>i+1$ or it changes the target whom it follows to $j$ for $j\le i$. Even when a node is visited twice, time will have advanced at least one time unit, since there are no zero costs cycles and hence, a finite path is constructed. Therefore, Player $i+1$ reaches sink $t$ by following a Player $j$ with $j<i+1$.  Consequently, Player $i+1$ reaches $t$ at time $T+1$ at the latest.   
\end{proof}

Note that this statement does not hold for intermediate nodes. There, players may arrive much earlier on a subordinate edges and wait until they can proceed. 

\begin{theorem}\label{prob:multinodes}
For every $k \in \N$, $k \geq 3$, there exists a routing game with edge priorities with $k$ players on a network with $2k-1$ edges and unit capacities $u\equiv 1$ such that there is a player who visits a node $k-1$ times in every PNE. 
\end{theorem}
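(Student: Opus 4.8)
The plan is to construct an explicit family of networks, one for each $k \geq 3$, and to exhibit a gadget that forces a designated player to be repeatedly blocked at a single node, so that its only viable equilibrium strategy is to loop through that node $k-1$ times. The natural approach is to build on the same mechanism seen in the zero-cost-cycle and Braess examples: use a high-priority "crossing" edge that repeatedly denies a low-priority player access to the edge it needs, forcing it to wait or, better, to detour through a short cycle that returns to the same node. Since the edge budget is tight ($2k-1$ edges) and capacities are all one, I would design the network so that the players $1,\dots,k-1$ occupy and dominate one route, while the last player (say Player $k$, or by symmetry some fixed player) must funnel through a bottleneck node $v$ that can only admit one player per time step, and where each of the earlier players sequentially takes priority. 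With unit capacities and integral transit times, at most one player leaves $v$ on the critical edge per time unit, so if $k-1$ earlier players each claim that edge in successive time steps, the remaining player is forced to cycle back to $v$ to try again, visiting it once for each rejection.

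First I would fix the topology precisely: a single source $s$, a single sink $t$, the distinguished node $v$, and a zero-cost (but nonzero, to avoid zero-cost cycles — see Section~\ref{sec:zerocycles}) cycle attached at $v$ that a blocked player can traverse to reacquire a position at $v$ one time step later. I would then assign transit times and the priority order $\pi$ on the outgoing edge of $v$ so that every incoming edge carrying an earlier player outranks the incoming edge carrying the returning (looping) player. The edge count $2k-1$ suggests a very lean construction, likely a single "spine" path of length roughly $k$ feeding into $v$, plus the loop and the exit to $t$; I would verify the count by listing edges explicitly. Next I would argue the forward direction: in the profile where the designated player loops $k-1$ times, each loop is forced because at each of the $k-1$ relevant time steps an earlier player takes the critical edge, and by Lemma~\ref{lemma:zeitpunkte} the earlier players arrive in consecutive time steps, so the blocking is perfectly staggered to hit the looping player every single time.

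The crux, and the part requiring the most care, is to show this is not merely \emph{one} PNE exhibiting the behavior but that the repeated visits occur in \emph{every} PNE. For this I would invoke Lemma~\ref{lemma:zeitpunkte} as the main lever: in any PNE the starting order is preserved at $t$ and consecutive arrival times differ by at most one, which pins down the arrival times of players $1,\dots,k-1$ to a tight consecutive block and forces the designated player's arrival time to the unique value $T+1$ attainable only by looping. I would then argue that no alternative $s$-$t$ path for the designated player can do better: any path avoiding the cycle is either blocked outright by the priority structure or is strictly longer, so deviating cannot reduce the player's cost, and conversely any faster arrival is impossible because the critical edge is saturated by higher-priority players at every earlier time step. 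The delicate point is ruling out \emph{all} competing profiles — I must ensure the network admits no PNE in which some earlier player chooses a different route that frees up the bottleneck, which is why the priority order and the rigidity supplied by Lemma~\ref{lemma:zeitpunkte} must jointly force a unique arrival-time pattern.

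The main obstacle I anticipate is the global consistency of the construction: with only $2k-1$ edges there is essentially no slack, so I must check simultaneously that (i) the network contains no zero-cost directed cycle (so the game is well-defined per Section~\ref{sec:zerocycles}), (ii) the loop has strictly positive cost yet is still the best response for the blocked player, and (iii) the priority lists $\pi$ are consistent across the (few) nodes with multiple incoming edges. Verifying that the forced-looping profile is genuinely a Nash equilibrium — that \emph{every} player, not just the designated one, has no improving deviation — while keeping the edge count at exactly $2k-1$ is where the real work lies, and I expect it to reduce to a careful, essentially inductive, bookkeeping of which player occupies the bottleneck edge at each integer time step.
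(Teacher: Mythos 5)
Your core mechanism is the wrong way around, and this is a genuine gap, not a presentation issue. In this model a player who is denied entry to an edge is not ejected from the node: it \emph{waits in the FIFO queue on its current incoming edge} and retries at the next time step with the priority of that same edge. Since you give the returning loop edge the \emph{lowest} priority at $v$, looping is weakly dominated by simply waiting (and strictly dominated once the loop has positive transit time), so there is always a PNE in which your designated player waits at $v$ and visits it exactly once --- the ``in every PNE'' claim collapses. Worse, you designate the \emph{last} player as the looper; the last player faces no subsequent players, so no threat can ever force it into a particular tie among equal-cost strategies. The paper's construction inverts both choices: the loops at $v$ have the \emph{highest} priority in the list of the exit edge $e_1=(v,t)$, so traversing a loop is a strict priority upgrade, and the players who loop are the \emph{early} players $1,\dots,k-1$ (Player $i$ loops $i-1$ times, so Player $k-1$ visits $v$ exactly $k-1$ times), who are forced to do so defensively: if Player $i$ skipped a loop, Player $i+1$ could grab the freed high-priority edge and overtake it. That overtaking/exchange argument --- not Lemma~\ref{lemma:zeitpunkte}, which only pins down arrival times and says nothing about path structure --- is what rules out all competing profiles; the last player is deliberately left unconstrained.

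Your edge budget also cannot work as described. With a single loop at $v$, forcing $k-1$ visits would force one player to traverse that loop edge $k-2$ times in every PNE, contradicting Corollary~\ref{cor:double_edges}, which guarantees a PNE in which each player uses each edge at most once. This is precisely why the paper needs $k-2$ \emph{distinct} loops: the count $2k-1$ decomposes as $k$ parallel $s$-$v$ edges, $k-2$ loops at $v$, and one exit edge $(v,t)$, with a global priority order $e_1 \succ e_2 \succ \dots \succ e_{2k-1}$ (exit edge, then loops, then source edges), and all transit times equal to $1$ so that no zero-cost cycle arises. Any repair of your construction would have to import both of these ingredients --- high-priority loops used defensively by non-last players, and one fresh loop per forced revisit --- at which point it becomes the paper's proof.
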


\begin{proof}
We construct an instance with a unique PNE. The network consists of three nodes $s$, $v$, and $t$. The source $s$ is connected to $v$ with $k$ parallel edges $e_{k},\dots,e_{2k-1}$. Furthermore, there are $k-2$ loops at $v$, namely $e_2,\dots,e_{k-1}$, and a single edge $e_1$ from $v$ to the sink $t$. All edges have capacity $u\equiv 1$ and transit time $\tau\equiv 1$. An example of the network is given in Figure~\ref{fig:kreis}. We use global edge priorities, i.e., the index of the edge is equivalent to its priority. 

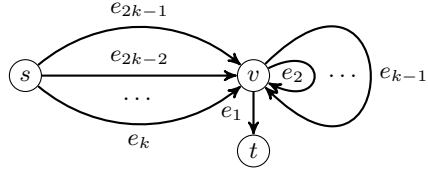
\begin{figure}[ht!]
\centering
\begin{tikzpicture}[every node/.style={circle, minimum size=12pt, inner sep=2pt, font = \footnotesize}]
\clip(4.5,4.75) rectangle (11,7);
\node[draw] (s) at (5,6) {$s$};
\node[draw] (m) at (8,6) {$v$};
\node[draw] (t) at (8,5) {$t$};

\draw[bend angle=40, bend left, blackedge] (s) to node[above, inner sep=0pt, rectangle] {$e_{2k-1}$} (m); 
\draw[blackedge] (s) to node[above, inner sep=0pt, rectangle] {$e_{2k-2}$} (m); 
\draw[bend angle=40, bend right, blackedge] (s) to node[below, inner sep=0pt, rectangle] {$e_{k}$} (m); 
\node(dots) at (6.5,5.7) {$\dots$};

\draw[blackedge] (m) .. controls (9,6.5) and (9,5.5) .. node[left] {$e_2$} (m);
\draw[blackedge] (m) .. controls (10,8.0) and (10,4.0) .. node[right] {$e_{k-1}$} (m);
\node(dots) at (9.20,6) {$\dots$};

\draw[blackedge] (m) to node[left] {$e_1$} (t); 
\end{tikzpicture}
\caption{This example shows that there is a PNE such that a player visits a node multiple times. All edges have unit travel time. We use a global priority list where $e_i$ is preferred over $e_j$ if $i<j$.}\label{fig:kreis}
\end{figure}

First, we define a state $P$ in which Player $k-1$ visits node $v$ exactly $k-1$ times and prove that this is indeed a PNE. Afterwards, we show that the strategies of the Players 1 to $k-1$ coincide in every PNE.
For Player $i \in \{ 1 , \ldots ,k-1 \}$ define $P_i=(e_{k+i-1}, e_i , \ldots , e_1)$. Player $k$ can choose any path arriving at time $k+1$. By construction, Player $i$ visits node $v$ exactly $i$ times and arrives at time $i+1$ for $i \in \{1, \ldots , k-1 \}$. Assume for the sake of contradiction that the state $P=(P_1, \ldots , P_k)$ is not a PNE. Then there is a player that can improve her strategy. Since the players arrive one after another, improving is only possible by delaying a preceding player. This is not possible, since every player goes along edges with the highest possible priorities. 

Assume there is a PNE $P'$ which differs from $P$ by another path than $P_k$. Let $i$ be the first player deviating from the strategy in $P$. Player $i$ cannot arrive earlier than in $P$ as argued above. Moreover, Player $i$ cannot arrive later than in profile $P$ in a PNE, since strategy $P_i$ is always a valid choice for arriving at time $i+1$. Thus, $P'$ realizes the same arrival time as $P_i$. In this case, Player $i+1$ can now improve by playing $(e_x, e_i , \ldots , e_1)$ where $e_x$ is the unused edge among $e_{k+i-1}$ and  $e_{k+i}$ with the lower index. With this strategy, Player $i+1$ overtakes Player $i$ either before entering the loops (if Player $i$ does not choose $e_{k+i-1}$) or while traversing the loops. Thus, there is an improving move for Player $i+1$ and $P'$ was not a PNE.
\end{proof}

Note that we used multi-edges for simplicity, here. In a network without multi-edges, one can still achieve ${\cal O}(k)$ visits of a node by a single player by subdividing each edge. This result is also not limited to networks with unit capacities. By using a single edge $(s,v)$ with higher capacity in the example in Figure~\ref{fig:kreis}, one can achieve a similar result. 

Contrariwise, there is no such result on the edges, like we show in Corollary~\ref{cor:double_edges} in Section~\ref{sec:alg}. Since edges and FIFO determine priority, there is no advantage of using an edge twice.

\subsection{Bounding the PoS and the PoA}

On the one hand, we have seen in Lemma~\ref{prob:notunique} that there may exist several user equilibria with different values for a routing game with edge priorities. Consequently, one may wonder how bad this selfish routing can be. What is the worst value of an equilibrium compared to the social optimum? This ratio is known as the 
\emph{Price of Anarchy} (PoA) [cf. \cite{koutsoupias1999worst}]. On the other hand, if a social optimal state is not an user equilibrium, then it is not stable, i.e., some players have occasion to change route. In many cases this will worsen the value of the solution. If every social optimal solution is not stable, then even the best equilibrium has additional costs compared to the social optimum. This ratio is called the \emph{Price of Stability} (PoS) [cf. \cite{anshelevich2008price}].   

In this subsection, we give a tight bound of the PoA for routing games with edge priorities. Furthermore, we show that there are instances of a routing game with edge priorities where every social optimum is not stable and we present a lower bound example of the PoS.  

\begin{theorem}\label{theo:poa}
The Price of Anarchy in a symmetric routing game with edge priorities is at most $\frac{k+1}{2}$, where $k$ is the number of players. This upper bound is tight.
\end{theorem}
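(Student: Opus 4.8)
The plan is to prove the two directions separately: the upper bound $\mathrm{PoA} \le \frac{k+1}{2}$ and the matching lower bound. Throughout, write $d$ for the distance from $s$ to $t$, i.e.\ the minimum transit time of an $s$-$t$ path, and recall that $d \ge 1$ since games with $s$-$t$ distance zero are excluded. The whole upper bound rests on two estimates that I combine with Lemma~\ref{lemma:zeitpunkte}.

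First I would lower-bound the social optimum. Since every player traverses an $s$-$t$ path and transit times are constant, each player arrives no earlier than $d$, so $C_i(P^{*}) \ge d$ for every $i$ and hence $C(P^{*}) \ge k d$ for a socially optimal profile $P^{*}$. Next I would upper-bound the cost of an arbitrary PNE $P$. By Lemma~\ref{lemma:zeitpunkte} the arrival times satisfy $C_{i}(P) \le C_{1}(P) + (i-1)$, so $C(P) = \sum_{i=1}^{k} C_i(P) \le k\,C_1(P) + \frac{k(k-1)}{2}$. Combining the two estimates gives $\mathrm{PoA} \le \frac{k\,C_1(P) + \tfrac{k(k-1)}{2}}{k d} = \frac{C_1(P)}{d} + \frac{k-1}{2d}$, which, once we know $C_1(P) = d$, is at most $1 + \frac{k-1}{2} = \frac{k+1}{2}$ because $d \ge 1$.

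The main obstacle is therefore the claim that in every PNE the first player arrives at time exactly $d$. The inequality $C_1(P) \ge d$ is immediate; the content is $C_1(P) \le d$, i.e.\ Player~$1$ has a best response reaching $t$ at time $d$. I would prove that Player~$1$ can follow a shortest path without ever being delayed. Fixing the other players' strategies, let Player~$1$ traverse a shortest path, arriving at each intermediate node $v$ at the earliest possible time $d(s,v)$; the key is to choose a coherent shortest path so that $v$ is entered via the incoming edge that is highest in the priority list $\pi(e)$ of the outgoing edge $e$ used at $v$, among the incoming edges lying on a shortest $s$-$v$ path. Any competitor contesting $e$ at $v$ at time $d(s,v)$ must itself have reached $v$ via a shortest-path incoming edge, whose priority is then no higher than Player~$1$'s choice: if it used a different edge it loses on the list, and if it used the same edge it loses by FIFO or, in case of a tie, by the player-ID rule, since Player~$1$ reaches the tail of that edge at the earliest possible moment. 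Competitors arriving via a non-shortest incoming edge reach $v$ strictly later than $d(s,v)$ and cannot block Player~$1$ at all. Hence Player~$1$ never waits and arrives at time $d$. The delicate point — and the step I expect to require the most care — is realizing these locally priority-optimal incoming-edge choices simultaneously along a single shortest path, which I would settle by a backward greedy selection over nodes ordered by $d(s,\cdot)$.

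For tightness I would reuse the $b$-Braess construction from the proof of Proposition~\ref{prob:notunique} (Figure~\ref{fig:pnenotunique}) with $k = b$ players. There $d = 1$, the parallel-paths profile is socially optimal with $C(P^{*}) = k$, and the zigzag profile is a PNE in which Player~$i$ arrives at time $i$, so its value is $\frac{k(k+1)}{2}$. This gives a PNE whose ratio to the optimum is $\frac{k(k+1)/2}{k} = \frac{k+1}{2}$, matching the upper bound and establishing the stated tightness.
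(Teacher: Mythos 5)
Your proposal is correct and follows essentially the same route as the paper: both arguments rest on Lemma~\ref{lemma:zeitpunkte} to bound the gap between consecutive arrival times, on the fact that in every PNE the first player arrives exactly at the shortest-path distance (giving $C(P)\le k\ell+\tfrac{k(k-1)}{2}$ against the optimum's lower bound $k\ell$, maximized at $\ell=1$), and on the zigzag equilibrium in the $b$-Braess graph of Proposition~\ref{prob:notunique} for tightness. The only difference is one of presentation: where the paper defers the existence of an un-delayable, highest-priority shortest path for Player~1 to the correctness proof of Algorithm~\ref{algo:pathfinder}, you inline that argument via the backward greedy choice of priority-maximal shortest-path incoming edges, which is precisely what the algorithm does.
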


\begin{proof}
The proof uses a very similar argument as used in \cite{HPSV16} to show a result for the PoA in routing games with player priority lists. In the first part of the proof, we show the upper bound on the price of anarchy $PoA \leq \tfrac{k+1}{2}$.

Let $\ell$ be the length of a shortest $s$-$t$ path. We show by induction that the costs of player $i$, that is, the $i$-th player in the global tie-breaking, in a PNE is bounded by $\ell+(i-1)$.

First note that in any PNE the first Player 1 in the tie-breaking order will always choose a shortest path and arrive at time $\ell$. If this player arrives at a later moment in time, it is an improving move to switch to the shortest path with highest possible priority. For the existence of such a path, we again refer to Algorithm~\ref{algo:pathfinder}.

For player $i$ assume that the claim holds true for the Players $i-1$, i.e., Player $i-1$ arrives no later than $\ell+(i-2)$. Lemma~\ref{lemma:zeitpunkte} implies that Player $i$ arrives at most one time unit after Player $i-1$ in every PNE. Hence, Player $i$ arrives at time $ \ell+(i-1)$ at the latest. 

This yields an upper bound on the costs of an equilibrium $P$, namely 

\[C(P)\le \sum_{i \in N} \left(\ell + (i-1)\right) = k\ell \cdot \frac{k(k-1)}{2}.\]

In an optimal solution, each of the $k$ players needs to traverse at least a shortest path of length $\ell$. Therefore, the costs of a social optimal solution are at least $k\ell$.

For the price of anarchy it follows that
$$
\operatorname{PoA} \leq \frac{k\ell+\frac{k(k-1)}{2}}{k\ell}=1+\frac{k-1}{2\ell},
$$
which is maximal for $\ell=1$ and provides an upper bound of $\frac{k+1}{2}$ on the Price of Anarchy in any routing game with edge priorities.

Moreover, we have already shown an example in the proof of Lemma~\ref{prob:notunique}, where the price of anarchy is equal to $\frac{k+1}{2}$. Hence, this bound is tight.
\end{proof}

For similar routing games, e.g.~\cite{HPSV16}, it turns out that there is a stable socially optimal routing, i.e., the PoS is equal to 1. Surprisingly, this is not true for the routing game with edge priorities and the PoS can be of the same order of magnitude as the PoA.

\begin{theorem}\label{theo:pos}
The Price of Stability in a symmetric routing game with edge priorities can be $\geq \frac{k+1}{4}$, where $k$ is the number of players.
\end{theorem}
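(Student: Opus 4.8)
The plan is to exhibit an explicit family of instances, one for each $k$, in which every socially optimal profile fails to be a PNE, and in which even the cheapest equilibrium is as expensive as the worst-case bound of Theorem~\ref{theo:poa} up to a factor of two. Concretely, I aim for instances whose social optimum has total cost $C(\text{OPT})=\Theta(k)$ (each player travelling an essentially direct route of constant length) while \emph{every} equilibrium has cost $\Theta(k^2)$, so that the quotient yields $\operatorname{PoS}\ge\frac{k(k+1)/2}{2k}=\frac{k+1}{4}$. The natural candidate is a Braess-type network in the spirit of Figure~\ref{fig:pnenotunique}: that graph already realizes a costly zigzag state in which player $i$ arrives at time $i$, and the only thing missing for a Price of Stability bound is to destroy the cheap parallel equilibrium that currently keeps it at $1$.

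To this end I would start from the $b$-Braess graph and re-weight and augment it so that the Braess paradox is genuinely present. Each player should have an edge-disjoint \emph{direct} lane of length $2$, used only in the social optimum, whereas the chain of high-priority (red wavy) shortcut edges should form a route of length $\ell=1$ when empty -- irresistible to a single deviator but a capacity-one bottleneck once many players use it. The precise constants are chosen so that the two estimates below meet exactly at $\frac{k+1}{4}$. Computing the optimum is then a short counting argument: routing everybody on the pairwise edge-disjoint direct lanes incurs no waiting and certifies $C(\text{OPT})\le 2k$, while any profile that pushes $j$ players onto the shortcut pays the queueing penalty $1+2+\dots+j$ there, so the total cost is minimized near $j=0$ and remains $\Theta(k)$; this both bounds $C(\text{OPT})$ and identifies the optimal profiles as the (near-)all-direct ones.

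Next I would prove that no optimal profile is stable and that every equilibrium is expensive. For instability, observe that in any min-cost profile the shortcut chain is essentially empty, so its high-priority edges carry no queue; a player on an expensive direct lane can then unilaterally switch onto the empty shortcut and strictly decrease its arrival time -- the Braess deviation -- whence every social optimum admits an improving move and $\operatorname{PoS}>1$. For the quantitative bound I would show that in \emph{every} PNE the players are forced to queue along the capacity-one shortcut in start order: since Player~$1$ always leaves $s$ first and the FIFO rule together with the edge priorities preserves this order (in the spirit of Lemma~\ref{lemma:zeitpunkte}), the $i$-th player must wait behind the $i-1$ earlier ones and cannot reach $t$ before time $\ell+(i-1)$. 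Summing, $C(P)\ge\sum_{i=1}^{k}\bigl(\ell+(i-1)\bigr)=\tfrac{k(k+1)}{2}$ for every equilibrium $P$, and dividing by the certificate $C(\mathrm{OPT})\le 2k$ gives $\operatorname{PoS}\ge\frac{k+1}{4}$.

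The main obstacle is this last step, and it is exactly what separates a Price of Stability bound from the Price of Anarchy bound of Theorem~\ref{theo:poa}: for the latter it suffices to display one bad equilibrium, as in Proposition~\ref{prob:notunique}, whereas here I must rule out the existence of \emph{any} cheap equilibrium. The delicate point is to verify that no mixed profile -- a few players on direct lanes, the rest on the shortcut -- can be stable. I expect to handle this by an induction on the players in start order, using the deterministic dynamics of the edge-priority model to compute each player's unique arrival time given the earlier choices, and arguing that whenever a player sits on a direct lane it can still improve by slipping onto a higher-priority shortcut edge, so that the full queueing along the chain is unavoidable in equilibrium. Making this cascade argument airtight, and simultaneously tuning the transit times so that the direct lanes are cheap enough to certify $C(\mathrm{OPT})\le 2k$ yet unstable enough to collapse every equilibrium onto the zigzag, is where the real work lies.
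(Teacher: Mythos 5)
Your plan is, in outline, exactly the paper's construction: re-weight a $b$-Braess graph so that the parallel lanes cost $2$, the prioritized zigzag costs $1$, the social optimum (all players on edge-disjoint lanes) costs $2k-1\le 2k$, and every PNE costs $\frac{k(k+1)}{2}$, which yields $\operatorname{PoS}\ge\frac{k(k+1)/2}{2k}=\frac{k+1}{4}$. The instability argument for the optimum (a Braess deviation onto the empty high-priority shortcut) is also the same. So the blueprint and the arithmetic match the paper.

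However, the step you yourself flag as ``where the real work lies'' --- that \emph{every} PNE costs $\frac{k(k+1)}{2}$ --- is genuinely missing, and the mechanism you sketch for it would not close the gap. First, Lemma~\ref{lemma:zeitpunkte} points in the wrong direction for this purpose: it says that in a PNE the starting order is preserved at $t$ and consecutive arrivals differ by \emph{at most} one time unit, which is an upper bound on arrival times (that is how it is used for the PoA bound), not the lower bound $C_i\ge \ell+(i-1)$ you need. Order preservation is perfectly consistent with simultaneous arrivals --- indeed, in the (unstable) social optimum $k-1$ players arrive at time $2$ --- so ``FIFO plus priorities preserve the order'' does not force a queue. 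Second, your cascade claim that ``whenever a player sits on a direct lane it can still improve by slipping onto a higher-priority shortcut edge'' is false in general profiles: when players $1,\dots,k-1$ occupy the full zigzag, the last player arrives at time $k$ on a direct lane and cannot improve at all, since every merge point is saturated by prioritized traffic. The argument that actually works, and is what the paper does, is \emph{defensive} rather than greedy: if some player $i\le k-1$ does not play the complete high-priority zigzag path, then player $i+1$ can deviate to the full zigzag, win the priority at every merge node, and overtake player $i$; this deviation is strictly improving for player $i+1$ because, by Lemma~\ref{lemma:zeitpunkte}, it currently arrives no earlier than player $i$. Hence in every PNE players $1,\dots,k-1$ must use the zigzag path --- not because it makes them faster, but because it is the only strategy that cannot be overtaken --- and since that path is a capacity-one chain, the arrival times are pinned to $1,2,\dots,k$. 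Without this overtaking threat your induction has no engine, so as written the proposal does not establish the required lower bound on the cost of every equilibrium.
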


\begin{proof}
We consider a $b$-Braess graph with $b$ parallel paths and unit capacities. All outgoing edges of $s$ and all but the lowest incoming edge of $t$ have travel time 1. All other edges, in particular the third edge of the $b$-th path, have travel time 0. The 4-Braess graph with these travel times is shown in Figure~\ref{fig:pos}. The red wavy edges connecting the parallel paths have the higher priority.

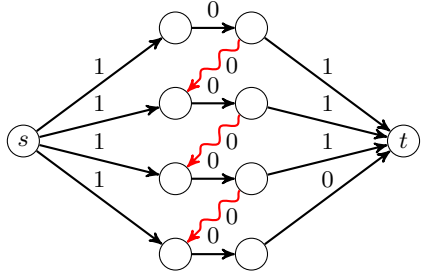
\begin{figure}[ht!]
\centering
\begin{tikzpicture}[every node/.style={circle, minimum size=12pt, inner sep=2pt, font = \footnotesize}]
\node[draw] (s) at (0,1.5) {$s$};
\node[draw] (v0) at (2,3) {$ $};
\node[draw] (v1) at (3,3) {$ $};
\node[draw] (v2) at (2,2) {$ $};
\node[draw] (v3) at (3,2) {$ $};
\node[draw] (v4) at (2,1) {$ $};
\node[draw] (v5) at (3,1) {$ $};
\node[draw] (v6) at (2,0) {$ $};
\node[draw] (v7) at (3,0) {$ $};
\node[draw] (t) at (5,1.5) {$t$};

\draw[priority1] (s) to node[above] {1} (v0); 
\draw[priority1] (s) to node[above] {1} (v2); 
\draw[priority1] (s) to node[above] {1} (v4); 
\draw[priority1] (s) to node[above] {1} (v6); 

\draw[priority2] (v1) to node[right] {0} (v2); 
\draw[priority2] (v3) to node[right] {0} (v4); 
\draw[priority2] (v5) to node[right] {0} (v6); 

\draw[priority1] (v0) to node[above] {0} (v1);
\draw[priority1] (v2) to node[above] {0} (v3); 
\draw[priority1] (v4) to node[above] {0} (v5); 
\draw[priority1] (v6) to node[above] {0} (v7); 

\draw[priority1] (v1) to node[above] {1} (t); 
\draw[priority1] (v3) to node[above] {1} (t); 
\draw[priority1] (v5) to node[above] {1} (t); 
\draw[priority1] (v7) to node[above] {0} (t); 
\end{tikzpicture}
\caption{The 4-Braess graph for a game with four players, where the PoS is $\frac{10}{7}$. Red wavy edges have higher priority.}\label{fig:pos}
\end{figure}

In a game with $k=b$ players on this $b$-Braess graph, there is a unique equilibrium, except for the strategy of the last player. In every PNE, Player 1 to $k-1$ will use the zigzag path of length 1. Assume any subsequent Player $i$ does not choose this path. Then player $i+1$ can directly overtake Player $i$. The last player is not in danger of being overtaken and thus, this player can choose any strategy which uses the third edge of the $b$-th path. In total, arrival times add up to $\sum_{i \in N} i = \frac{k(k+1)}{2}$.

Since at most one player can arrive at time 1, it is obviously socially optimal to use the $b=k$ parallel paths, where one player arrives at time 1 and all other players arrive at time 2. Thus, costs of the unstable social optimum are $2k-1$. 

Therefore, the price of stability in this game is given by
$$
\operatorname{PoS}=\frac{\frac{k(k+1)}{2}}{2k-1}\geq \frac{\frac{k(k+1)}{2}}{2k}=\frac{k+1}{4}.
$$
\end{proof}

\section{Computing Equilibria and the Price of Mistrust}
\label{sec:alg}
\subsection{Computing an equilibrium}

In this section, we present an efficient algorithm for computing PNEs in symmetric routing games with edge priorities. We start with an outline of the main ideas, before we give a detailed description of the algorithm in pseudo-code and proofs for correctness and run-time.

The algorithm itself mainly consists of three steps. The first step initializes a kind of shortest path network and this step is executed only once. In step two, a path for the next player is found within this shortest path network, and in step three, the network is updated to renew the earliest arrival property for the upcoming player. Step two and three are executed once for each player. In detail:

 \begin{enumerate}
  \item A modified Dijkstra search~\cite{Dijkstra59} is executed starting in $s$. We determine two functions $d : V \to \N_0$ and $\varepsilon : E \to \N_0$. Here, $d(v)$ describes, at which time step node $v$ can be reached at the earliest, i.e., at this initialization step $d(v)$ is the standard label set by Dijkstra's algorithm. The function value $\varepsilon(e)$ defines the earliest point in time at which edge $e$ can be left. Hence, $d(v)\le \varepsilon(e)$ where $e=(u,v)$. In the initialization, it holds $\varepsilon(e) = d(u)+\tau(e)$ for $e=(u,v)$, since there are no waiting times. Moreover, the Dijkstra search is not stopped, when the final label of node $t$ is found. Instead, we explore the whole network to prepare the data structure for even longer paths of subsequent players. Please note that the sub network $G'=(V,E')$ where $E'=\{e\in E: e=(u,v) \text{ and }\varepsilon(e)=d(v)\}$ contains all shortest paths from $s$ to $t$. More precisely, every path from $s$ to $t$ in this sub network is a shortest path in the original network. 
  
  \item We now perform a backward search in this sub network $G'=(V,E')$ to find a path $p$ for every Player $i$, starting with Player 1. Additionally, we introduce or reset the function $\Delta:V\to \N_0$. For each node $v\in V$, $\Delta(v)$ is the time step at which we are going to leave node $v$ via an edge $e=(v,w)$. Initially, we set $\Delta(t)=d(t)$ and $\Delta(v)=\operatorname{NaN}$ for $v\not=t$.
  
  Any player cannot reach the sink $t$ before $d(t)$. Since we may enter $t$ from every incoming edge immediately, we choose an arbitrary edge $e=(v,t)$ with $\varepsilon(e)=d(t)$ as the last edge of the path $p$. Consequently, we have to leave $v$ at $d(t)-\tau(e)=:\Delta(v)$. More generally, assume we have chosen an edge $e=(u,v)$ to reach node $v$ at time $\Delta(v)$ at the latest. From all available options we choose the earliest possible time to traverse this edge.
  This yields the time $\Delta(u)$ at which we want to leave node $u$. 
  
  Hence, as predecessor of edge $e$, we only have to consider edges $e'=(w,u)$ with $\varepsilon(e')\le \Delta(u)$. Among these edges, we choose the one with the lowest index in $\pi(e)$, i.e., the edge $e'$ with the highest priority to enter $e$. We add $e'=(w,u)$ to the path. Again, we use it at the earliest possible time and we update $\Delta(w)$ accordingly. 
  
  Due to the definition of $d$ and $\varepsilon$, we can always find a feasible edge. Thus, we eventually reach the source $s$ by iteratively adding preceding edges. 
  
  \item After assigning Player $i$ to the path $p$ constructed in step two, we have to update the values of $d$ and $\varepsilon$. If player $i$ exhausted the capacity of an edge $e=(u,v)$, we increment $\varepsilon(e)$ by 1. We now perform a modified Dijkstra search to check, whether we have to increment other labels $d(v)$ and $\varepsilon(e)$, too. In detail, the new values are $\varepsilon(e):=\max\{\varepsilon(e),d(u)+\tau(e)\}$ and $d(v):=\min\{\varepsilon(e): e=(u,v)\}$ starting with $d(s)=0$. Finally, we reset $\Delta$.  
  
  Now, we can go back to step two to compute the path for the next Player $i+1$.
 \end{enumerate} 
 
In Algorithm~\ref{algo:pathfinder} the pseudo-code of the algorithm can be found. 
 
\begin{algorithm2e}[ht!]
\KwIn{$G=(V,E)$ with priorities $\pi$, $s,t \in V$ and set of players $N = \{1, \dots, k\}$ }
\KwOut{A walk for every player $i \in N$}
calculate $d(v)$ and $\varepsilon(e)$ $\forall v \in V, \forall e \in E$\;\label{line:d_and_eps}
\For{$i$ := 1 to $k$}{
  $e$ := arbitrary edge of $\delta_{\operatorname{in}}(t)$ with $\varepsilon(e) = d(t)$\;
  $v$ := tail($e$)\;
  $\Delta(v)$ := $d(t) - \tau(e)$\;
  $p$ := $\{e\}$\;
  \While{$v \neq s$}{\label{line:path_while_begin}
      let $e' \in \delta_{in}(v)$, with $\varepsilon(e') \leq \Delta$ and maximal priority for $e$\;\label{line:*}
      $v$ := tail($e'$)\;
      $\Delta(v)$ := $\varepsilon(e') - \tau(e')$\;
      $p$ := $p$ $\cup$ $\{e'\}$\;
      $e$ := $e'$\;
  }\label{line:path_while_end}
  print ``Path of player $i$ is $p$''\;
  \ForEach{$e \in p$}{\label{line:capb}
    \If{capacity of $e$ at entry time $\Delta(\operatorname{tail}(e))$ is exhausted}{
    $\varepsilon(e)$ := $\varepsilon(e)$ + 1\;\label{line:cape}
    }
  }
  $d(s)$ := 0\;
  $d(v)$ := $\infty$ $\forall v \in V, v \neq s$\;
  $\Pi$ := $\operatorname{heap}(V,d)$\; 
  \While{$\Pi \neq \varnothing$}{\label{line:update_while_begin}
    $v$ := getMin($\Pi$)\;
    remove $v$ from $\Pi$\;
    \ForEach{$e \in \delta_{\operatorname{out}}(v)$}{
        $\varepsilon(e)$ := $\max\{\varepsilon(e), d(v) + \tau(e)\}$\;
        $d(\operatorname{head}(e))$ := $\min\{d(\operatorname{head}(e)), \varepsilon(e)\}$\;
    }
  }\label{line:update_while_end}
}
\caption{\textsc{Pathfinder}}\label{algo:pathfinder}
\end{algorithm2e}

Before we show the correctness of this algorithm, let us note some important observations. Firstly, for the first player, the situation is quite simple. Here, it holds $\Delta(v)=d(v)$ in all nodes of the constructed path in step two, since the network is not yet congested and no waiting times occur. Player 1 is actually using a shortest path. However, this is not true for subsequent players. Here, edge $e=(v,w)$ could be blocked at time $d(v)$ by a preceding player. 

Secondly, $d(v)$ denotes the earliest arrival time at each node. Furthermore, $\varepsilon(e)$ denotes the first moment in time after $d(v)$ where edge $e(v,w)$ has left over capacity. In particular, $d(t)$ is the earliest time to reach the sink and we always construct a path achieving this time bound.  Note that $d(t)$ may also increment. Yet, for intermediate nodes, it is not always advisable to arrive there at the earliest possible time. Doing so could mean using a low prioritized edge and being trapped in a waiting queue. 

Thirdly, we always use the earliest option for each edge under consideration. Therefore, no subsequent player can use an earlier option and nobody can displace Player $i$.  

And fourthly, by choosing the edge $e'=(u,v)$ with the highest priority in $\pi(e)$, we can guarantee that we can go from $e'$ to $e$ at time $\Delta(v)$. No subsequent player can use edge $e'$ at an earlier time, i.e., there is no additional delay on $e'$ to be expected. And we have the way of right  over all subsequent players who arrive at $v$ via other edges.   

With these remarks in mind, we can now conclude the correctness of Algorithm~\ref{algo:pathfinder}.

\begin{theorem}\label{theo:algo:pathfinder}
Let $(G, u,  \tau, N, \pi)$ be a symmetric routing game with edge priorities and $|N| \geq 1$. Then Algorithm \ref{algo:pathfinder} has running time $\O(|N| (m^2 + n\log n))$ and computes a PNE.
\end{theorem}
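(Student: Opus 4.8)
My plan is to prove the two assertions — the running-time bound and the fact that the output is a PNE — separately, treating correctness as the substantial part and relying throughout on the four remarks stated just before the theorem, on Lemma~\ref{lemma:zeitpunkte}, and on the standing assumption that $G$ has no directed zero cost cycle. For the running time I would first bound the length of the walk $p$ built in Lines~\ref{line:path_while_begin}--\ref{line:path_while_end}. Along this backward construction $\Delta$ is non-increasing, since each step sets $\Delta(w)=\varepsilon(e')-\tau(e')\le\Delta(v)$. The key observation is that the predecessor chosen at a node $u$ depends only on $\Delta(u)$, on $\varepsilon$, and on $\pi$; hence if an edge $e'=(w,u)$ were selected twice, both selections would assign $w$ the same value $\Delta(w)=\varepsilon(e')-\tau(e')$, forcing an identical and therefore periodic continuation, i.e.\ a closed subwalk with constant $\Delta$. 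Constant $\Delta$ forces $\tau\equiv 0$ on that subwalk, a directed zero cost cycle, which is excluded. So every edge occurs at most once and $p$ has length at most $m$. Each iteration scans the in-edges of the current node for the admissible edge of highest priority (Line~\ref{line:*}), costing $\O(m)$, so $p$ is built in $\O(m^2)$; the capacity check (Lines~\ref{line:capb}--\ref{line:cape}) costs $\O(m)$, and the relabelling (Lines~\ref{line:update_while_begin}--\ref{line:update_while_end}) is one modified Dijkstra run costing $\O(m+n\log n)$ with a Fibonacci heap. Summing over the $|N|$ players, plus the single initialisation in Line~\ref{line:d_and_eps}, gives $\O(|N|(m^2+n\log n))$.

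For correctness I would argue by induction on the processing order $i$ that, when all players follow their computed walks, Player~$i$ enters every edge $e$ exactly at the recorded time $\Delta(\operatorname{tail}(e))$ and reaches $t$ at the value of $d(t)$ current when $i$ was routed; call this value $d_i$. The base case is the first remark: Player~$1$ walks a shortest path and arrives at $d(t)$. For the step, the third and fourth remarks are decisive. Since Player~$i$ enters each edge $e'$ at the earliest instant at which residual capacity remains after Players~$1,\dots,i-1$, and reaches each node via the highest-priority admissible in-edge, no later player can take an earlier slot on $e'$ (FIFO on $e'$ and the player-ID tie-break both favour the smaller index) and no later player entering $e$ from a different in-edge can overtake (the edge priority $\pi(e)$ favours $e'$). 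Thus the schedule produced by the dynamics coincides with the one the algorithm records, and Player~$i$ is never delayed past $d_i$.

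The heart of the argument is optimality: I must show $d_i$ is also a \emph{lower} bound, so that no deviation $P_i'$ lets Player~$i$ reach $t$ before $d_i$. Because extra players only add congestion, the arrival time of Player~$i$ in the full game is at least its arrival time against the fixed walks of Players~$1,\dots,i-1$ alone; so it suffices to bound the latter. The only way Player~$i$ could beat $d_i$ against these players would be to displace an earlier player from a saturated edge $e$ by arriving via a strictly higher-priority in-edge $e''$ of $e$. I would rule this out with the fourth remark applied to that earlier player: it already chose the highest-priority admissible in-edge of $e$, so any higher-priority $e''$ must have had $\varepsilon(e'')$ too large to reach $\operatorname{tail}(e)$ in time when the earlier player was routed; since $\varepsilon$ is monotonically non-decreasing, $e''$ remains unusable for the later Player~$i$. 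Formalising this is where I expect the real work to lie, because ``reaching $\operatorname{tail}(e)$ in time'' itself presupposes the same no-displacement property one edge earlier, so the feasibility and optimality inductions must be run simultaneously down the deviating walk. Once this interlocking induction is in place, $d_i$ is both achieved and optimal for Player~$i$; as this holds for every $i$, the profile satisfies $C_i(P_i,P_{-i})\le C_i(P_i',P_{-i})$ for all $i$ and all $P_i'$, hence is a PNE, which in particular discharges the existence claim of Theorem~\ref{theo:existence}.
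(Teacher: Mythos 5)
Your proposal takes essentially the same route as the paper's own proof: termination and the length-$m$ bound on the backward walk via the no-zero-cost-cycle assumption, the identical per-player $\O(m^2+n\log n)$ accounting, and correctness by combining ``earliest possible slot'' with ``highest-priority admissible in-edge'' so that later players can neither delay Player~$i$ nor profitably displace Players $1,\dots,i-1$. The ``interlocking induction'' you flag as the remaining work is exactly what the paper dispatches by noting that any indirect (domino) delay must begin with a \emph{direct} delay, which is then excluded in two cases (same in-edge: contradiction with earliest use under FIFO; different in-edges: contradiction with the priority choice, since $\varepsilon$ only grows and no slots reopen), and the one step you skip --- that Line~\ref{line:*} always finds an admissible edge, because $d(v)\le\Delta(v)$ and $d(v)=\min\{\varepsilon(e):e\in\delta_{\operatorname{in}}(v)\}$ --- is covered there in a single line.
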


\begin{proof}
At first we show, that the algorithm always terminates. Consider the \texttt{while}-loop between lines \ref{line:path_while_begin} and \ref{line:path_while_end}. For any processed edge $e=(u,v)$, it holds $\Delta(u)=\varepsilon(e) - \tau(e)\le \Delta(v)- \tau(e)$. In other words, the values of $\Delta$ are non-increasing inside the loop. Assume there is an edge $e=(u,v)$ to be insert twice in $p$. In this case, $\Delta(v)$ has still to be  $\geq \varepsilon(e)$, but $\Delta(u)$ was already set to $\varepsilon(e) - \tau(e)$ in the first occurrence. In other words, this is only possible when there is cycle of length zero. However, we excluded networks with zero costs cycles from our considerations (see Section \ref{sec:zerocycles}). In consequence, the \texttt{while}-loop processes each edge at most once.

Furthermore, due to the definition of $d(v)$ and $\varepsilon(e)$, there is always at least one feasible edge at each node. Note that $\min\{\varepsilon(e):e\in\delta_{\operatorname{in}}(v) \}\le d(v)\le \min\{\varepsilon(e)-\tau(e):e\in\delta_{\operatorname{out}}(v) \}$ for all nodes $v\in V$. Since $d(s)=0$, $d(v)$ non-increasing, and we have no zero costs cycles, the \texttt{while}-loop eventually reaches $s$ and stops after at most $m$ iterations with an $s$-$t$-path.

Due to lines \ref{line:capb} to  \ref{line:cape}, which prevent using exhausted edges anymore, this path is also feasible.

It remains to show that the algorithm computes a PNE. First of all, we already observed that $d(v)$ denotes the earliest arrival time at a node. The update in lines \ref{line:update_while_begin} and \ref{line:update_while_end} is a slightly modified Dijkstra's algorithm, which also respects exhausted capacities at certain times. Hence, in the iteration for Player $i$, it is impossible to reach $t$ earlier than $d(t)$ on a free path. The constructed path for Player $i$, on the other hand, achieves this arrival time. 

We now use an inductive argument to show that neither Player $i$ can push any Player $j$ with $j<i$ aside to reach $t$ earlier nor any other Player $j$ with $j>i$ can delay Player $i$. We only have to consider direct delay here, which means Player $j$ uses an edge $e$ at exactly the time Player $i$ intended to use it. Indirect delay, i.e., Player $j$ causes a domino effect which delays Player $i$ in the end, always starts with a direct delay of another player $i'$. Thus, assume any Player $j$ could directly delay any Player $i$ with $i<j$ on edge $e=(v,w)$. We have to distinguish two cases. Firstly, both players entered $e$ coming from the same edge $e'$, but Player $j$ used this edge at an earlier point in time. However, this contradicts the construction of the path of Player $i$, since Player $i$ used edge $e'$ at the earliest possible time. Secondly, Players $i$ and $j$ enter edge $e$ from different edges. However, for Player $i$, we used the edge with the highest priority available at this time. Additionally, edges are consumed while assigning players, but no new edges become available for a fixed point in time. Hence, Player $j$ cannot push Player $i$ aside, since the edge of Player $j$ has lower priority.

Summarizing, during the algorithm each player plays the best response to the strategies of the preceding players and no player is delayed after making its choice. Moreover, for the same reasons, the final state is stable, since no player has an improving path.

Finally, let us analyze the run-time. To calculate $d(v)$ and $\varepsilon(e)$ in line \ref{line:d_and_eps}, we use Dijkstra's algorithm, which has a run-time in $\O(n\log n + m)$. In the main loop, which is executed for each player, we have at most $m$ iterations. In each iteration, we have to search in $\delta_{\operatorname{in}}$ for the edge with highest priority. Hence, we have at most $\O(|N|m^2)$ operations, here. The update of $d$ and $\varepsilon$ between line \ref{line:update_while_begin} and \ref{line:update_while_end} can be implemented like Dijkstra's algorithm. In total, this yields run-time of $\O(|N| (m^2 + n \log n+m) + n\log n)=\O(|N| (m^2 + n \log n))$
\end{proof}

Here, the run-time bound is given with respect to the network size. However, it is essentially determined by looping through the edge priority lists. Even when we visit a node twice or more as in Figure~\ref{fig:kreis}, priority is locally increasing, i.e., we only have to run once through all priority lists. Since the total length of all edge priority lists is in $\O(m)$ for global games we can improve the run-time analysis for this case to $\O(|N| (m + n \log n))$. However, these lists are part of the input, too. Hence, run-time is even linear regarding this part of the input. Since we have to return a path for every player, the run-time is obviously only pseudo-polynomial in the number of players $|N|$, but it is polynomial, when we use an unary encoding of players.

For proving the termination of the algorithm, we have shown that for every player each edge is processed at most once. So in addition  to Theorem~\ref{prob:multinodes}, there is always a PNE where each player uses each edge at most once.

\begin{corollary}\label{cor:double_edges}
In every symmetric routing game with edge priorities, there exists a PNE such that every edge is used at most once by each player. 
\end{corollary}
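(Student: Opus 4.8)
The plan is to obtain the desired PNE directly from the algorithmic construction rather than from a separate existence argument. By Theorem~\ref{theo:algo:pathfinder}, Algorithm~\ref{algo:pathfinder} terminates and outputs a state that is a PNE, so it suffices to verify that the walk assigned to each player by the algorithm traverses every edge at most once. I would establish this by reusing the monotonicity observation already made in the termination proof of Theorem~\ref{theo:algo:pathfinder}, where the path-construction loop in lines~\ref{line:path_while_begin}--\ref{line:path_while_end} is analyzed.

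Concretely, I would fix a single player and track the label $\Delta$ along the backward construction of its walk. For every appended edge $e=(u,v)$ the loop sets $\Delta(u)=\varepsilon(e)-\tau(e)$, and since the loop only selects edges with $\varepsilon(e)\le\Delta(v)$, we obtain $\Delta(u)\le\Delta(v)-\tau(e)\le\Delta(v)$; hence the $\Delta$-values are non-increasing as the walk grows. First I would assume, for contradiction, that some edge $e=(u,v)$ is appended twice. At its first appearance $\Delta(u)$ is already fixed to $\varepsilon(e)-\tau(e)$, whereas a second appearance requires $\Delta(v)\ge\varepsilon(e)$ at that later moment. Chaining the non-increasing $\Delta$-values around the closed subwalk joining the two appearances forces the transit times along that subwalk to sum to zero, i.e.\ a directed zero cost cycle. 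Such cycles were excluded from all instances under consideration in Section~\ref{sec:zerocycles}, which yields the contradiction.

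Consequently, each player's walk repeats no edge, the algorithm produces exactly such a state for every player simultaneously, and by Theorem~\ref{theo:algo:pathfinder} this state is a PNE; this proves the claim. The structural reason is that the FIFO-plus-priority discipline, encoded through the strict progress of $\Delta$, never rewards re-entering an edge, in sharp contrast to nodes.

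I do not expect a genuine technical obstacle, since the statement is essentially a byproduct of the termination argument; the only point needing care is the apparent tension with Theorem~\ref{prob:multinodes}. There a single node may be visited up to $k-1$ times, so repeated \emph{nodes} are unavoidable, whereas the corollary asserts the strictly weaker but still nontrivial fact that \emph{edges} never need to be repeated. I would therefore state explicitly that ``used at most once'' refers to edges and not to nodes, so that the corollary is read as complementing rather than contradicting the earlier theorem.
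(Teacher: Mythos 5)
Your proposal is correct and matches the paper's own argument: the paper also derives this corollary directly from the termination part of the proof of Theorem~\ref{theo:algo:pathfinder}, where the non-increasing $\Delta$-labels in the loop of lines~\ref{line:path_while_begin}--\ref{line:path_while_end} together with the exclusion of zero-cost cycles (Section~\ref{sec:zerocycles}) show that each edge is inserted into a player's walk at most once. Since the algorithm outputs a PNE, the existence claim follows exactly as you describe.
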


Yet, this does not imply that an optimal strategy does not contain an edge multiple times. For example, in the routing game in the proof of Theorem~\ref{prob:multinodes}, Player $k$ can use $P_{k}=(e_{k+1},e_{k-1},\dots,e_{k-1},e_1)$, which yields arrival time $k+2$, when edge $e_{k-1}$ is used at most $k$ times. 

Furthermore, please note that the presented algorithm does not necessarily compute the best PNE. In particular, the costs of the computed PNE significantly depends on the way in which we choose the incoming edge to the sink $t$. Since we do not have any priority here, one may choose this edge randomly or in any fixed order among all those egdes $e=(v,t)$ with $d(t)=\varepsilon(e)$. As an example, consider again the graph in Figure~\ref{fig:pnenotunique}. If we choose the bottom edge to $t$ first, the algorithm computes the zigzag-path for the first player and every following player. Thus, the algorithm computes the worst possible PNE in this scenario. Contrary, if the algorithm treats the incoming edges of $t$ in a fixed order from top to bottom, it computes the best PNE. 

\subsection{Price of Mistrust}

Algorithm~\ref{algo:pathfinder} calculates PNEs that have a special property. The players choose their strategies one by one in order of the tie breaking rule. A player always chooses a strategy minimizing its cost, but among all those strategies it takes the one, where it cannot be delayed by any of the following players (since it does not trust them). We call such a strategy a \emph{mistrustful best response}.  Note that a player distrusts only the players choosing a strategy after it. The players before took already a decision, furthermore they are always able to delay it by coping the strategy. In the proof of Theorem~\ref{theo:algo:pathfinder} we show that for any player there is always a mistrustful best response. We call a PNE \emph{mistrustful}, if every player plays a mistrustful best response. 

\begin{observation}\label{obs:pom_algorithm}
Every PNE that is computed by Algorithm~\ref{algo:pathfinder} is mistrustful.
\end{observation}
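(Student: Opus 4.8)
The plan is to read the statement off the proof of Theorem~\ref{theo:algo:pathfinder}, since essentially all the required work has already been carried out there. Recall that a mistrustful best response for Player~$i$ is a cost-minimizing strategy that, among all best responses, cannot be delayed by any player who fixes its strategy after Player~$i$. Hence two things must be verified for the path $p$ that Algorithm~\ref{algo:pathfinder} assigns to an arbitrary Player~$i$: first, that $p$ is a best response, and second, that $p$ admits no delay caused by any subsequent player. Once both hold, $p$ is a mistrustful best response, and since the argument applies to every player, the computed PNE is mistrustful.

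First I would recall the best-response property, which is already part of Theorem~\ref{theo:algo:pathfinder}. There it is shown that, given the fixed strategies of Players~$1,\dots,i-1$, the value $d(t)$ is the earliest time at which Player~$i$ can reach the sink, and that the path $p$ constructed in step two attains exactly this arrival time $d(t)$. Thus $p$ minimizes Player~$i$'s arrival time and is a best response.

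Next I would establish the no-delay property, which is precisely the second half of the delay analysis in the proof of Theorem~\ref{theo:algo:pathfinder}. For every edge $e=(v,w)$ on $p$, the algorithm traverses $e$ at the earliest time at which $e$ still has free capacity (the $\varepsilon$-update together with lines~\ref{line:capb}--\ref{line:cape}), and it enters $e$ via the incoming edge $e'$ of highest priority in $\pi(e)$ that is available at that time (line~\ref{line:*}). Consequently a later Player~$j$ with $j>i$ could delay Player~$i$ on $e$ in only one of two ways, both excluded: either $j$ enters $e$ from the same predecessor edge $e'$ at an earlier moment, which is impossible because Player~$i$ already uses $e'$ at the earliest feasible time and capacities are only consumed, never freed, while players are assigned; or $j$ enters $e$ from a different edge, in which case $j$'s edge has lower priority in $\pi(e)$ and must yield. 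As every indirect delay originates in such a direct delay, no subsequent player can delay Player~$i$ at all.

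Combining the two properties, the path assigned to Player~$i$ is a best response that cannot be delayed by any of the following players; in particular it trivially attains the minimum possible number of delaying successors, namely zero, so it is a mistrustful best response. Since this holds for every $i \in N$, the PNE computed by Algorithm~\ref{algo:pathfinder} is mistrustful. I do not expect a genuine obstacle here: the only delicate point, the existence of a zero-delay best response, is already guaranteed inside Theorem~\ref{theo:algo:pathfinder} and is produced constructively by the earliest-time, highest-priority selection rule, so the observation follows directly.
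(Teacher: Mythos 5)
Your proposal is correct and follows exactly the paper's own reasoning: the paper justifies this observation by pointing back to the proof of Theorem~\ref{theo:algo:pathfinder}, where it is shown that each player's constructed path both attains the earliest possible arrival time $d(t)$ (best response) and, by the earliest-time, highest-priority edge selection, cannot be delayed by any subsequent player, which is precisely the definition of a mistrustful best response. Nothing is missing; your treatment of indirect delays reducing to direct delays is the same argument the paper uses.
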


An interesting question is how much this mistrust costs. For this we define the \emph{Price of Mistrust (PoM)} as the quotient of the mistrustful PNE with the lowest costs and the social-optimum. For the best PNE it is often necessary that players co-operate and trust each other. Let us assume the graph in Figure~\ref{fig:pnenotunique}. If the first player chooses the lowest horizontal path, it knows that every following player finds a path that has optimal costs and does not block it. Unfortunately it is possible that any of these players uses a blocking path anyway, maybe because of ignorance or because of malignity.

In the following we present some games which show that the price of mistrust might be equal to the price of anarchy and further reach a value of $\tfrac{k+1}{2}$, but that it can also be strictly smaller than the PoA but strictly larger than the PoS. For this purpose, we use the idea of the $b$-Braess graph of the proof of Lemma~\ref{prob:notunique} and extend it. Roughly speaking we add a second zigzag path, which lies behind the first one and connects the parallel paths in the opposite directions. We call this graphs \emph{$b$-double-Braess graphs}. In Figure~\ref{fig:pom} we show two $4$-double-Braess graphs. In these graphs each outgoing edge of $s$ has costs one, each other edge has costs zero and all edges have unit capacities. For $k = b$ players the costs of the social optimum are $k$, since the players can use parallel paths with costs one. Since there is no cheaper path, this is also a PNE. At first we assume that both zigzag paths have higher priority than the other paths. For the 4-double-Braess graph this case is depicted on the left side of Figure~\ref{fig:pom}. No matter, which edge to $t$ the first player chooses, it always has to use the front zigzag path. Otherwise, it would be possible that a following player blocks it. By using this path it blocks each following player. For this players the same argumentation holds. So the players follow each other. The total costs are $\tfrac{k(k+1)}{2}$, so the PoM is equal $\tfrac{k+1}{2}$, which is also the PoA. Since the best PNE in this game has costs $k$, we can conclude the following corollary from Observation~\ref{obs:pom_algorithm}.

\begin{corollary}
There are symmetric routing games with edge priorities, where Algorithm~\ref{algo:pathfinder} is not able to compute the PNE with minimal costs.
\end{corollary}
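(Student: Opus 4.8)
The plan is to pin down a single family of instances on which the mistrustful equilibrium is demonstrably far from the cheapest equilibrium, and then let Observation~\ref{obs:pom_algorithm} do the rest. I would take exactly the $b$-double-Braess graph with $k=b$ players in which \emph{both} zigzag paths are prioritized over the horizontal paths, that is, the instance drawn on the left of Figure~\ref{fig:pom}. First I record the two framing quantities. The social optimum, which here is simultaneously the cheapest PNE, has cost $k$: the $k$ parallel $s$-$t$ paths all have length one, carry one player each without interference, and let every player arrive at time~$1$. The equilibrium produced by the algorithm, on the other hand, will turn out to have cost $\tfrac{k(k+1)}{2}$. Granting these two values the corollary is immediate, since by Observation~\ref{obs:pom_algorithm} every PNE returned by Algorithm~\ref{algo:pathfinder} is mistrustful and therefore has cost $\tfrac{k(k+1)}{2}$, which strictly exceeds the optimal $k$ for every $k\ge 2$.

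The crux --- and the step I expect to be the main obstacle --- is to show that in this instance the front zigzag path is the unique mistrustful best response of each player, so that everybody is funneled onto it. I would argue in the tie-breaking order. Fix a player and any of its cost-one alternatives. If the alternative is a horizontal path, then at the interior node where a zigzag edge feeds in, a later player travelling along that prioritized edge has the way of right and can delay it; the reason the instance carries \emph{two} zigzag paths in opposite directions is precisely to make this work for every horizontal option, including the topmost and bottommost ones that a single zigzag would leave unblockable. If the alternative is the back zigzag path, then it too can be blocked by a later player on the front zigzag, as the priorities are arranged so that the front zigzag outranks the back one at their shared nodes. By contrast, every edge traversed on the front zigzag path has the highest priority at its tail, so no subsequent player can displace a player who uses it. Hence the front zigzag path is the only cost-one path immune to delay by later players, i.e.\ the only mistrustful best response.

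With this in hand the dynamics are routine. Player~$1$ takes the front zigzag path and arrives at time~$1$, and each later player, again having no mistrustful best response other than the same path, arrives exactly one time unit behind its predecessor by Lemma~\ref{lemma:zeitpunkte}, so Player~$i$ arrives at time~$i$. Summing gives total cost $\sum_{i=1}^{k} i = \tfrac{k(k+1)}{2}$, establishing the gap and hence the corollary. The only point demanding genuine care is the uniqueness claim of the second paragraph: one must verify both that no horizontal path escapes being blocked and that the back zigzag offers no undelayable alternative, which is exactly what the symmetric two-zigzag construction together with the front-over-back priority is designed to guarantee.
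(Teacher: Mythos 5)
Your proposal is correct and takes essentially the same route as the paper: the same left-hand $b$-double-Braess instance of Figure~\ref{fig:pom}, the same key step that every mistrustful best response must traverse the front zigzag (so every mistrustful PNE costs $\frac{k(k+1)}{2}$ while the best PNE, the parallel-path optimum, costs $k$), and the same appeal to Observation~\ref{obs:pom_algorithm}. One minor overstatement: the front zigzag path is not literally the \emph{unique} mistrustful best response---a player may continue from it onto the rear zigzag and enter $t$ by any of its incoming edges, which is equally cost-one and unblockable---but since every such path uses the front zigzag edges, your blocking argument and the resulting cost are unaffected.
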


\begin{figure}
\centering
\begin{subfigure}{0.48\textwidth}
\centering
\begin{tikzpicture}[every node/.style={circle, minimum size=12pt, inner sep=2pt, font = \footnotesize}]
\node[draw](s) at (-0.25,2) {$s$};
\node[draw](t) at (5.25,2){$t$};
\node[draw](v3) at (0.75,3) {};
\node[draw](v4) at (0.75,2){};
\node[draw](v5) at (1.5,2){ };
\node[draw](v6) at (1.5,1){};
\node[draw](v7) at (2.25,1) {};
\node[draw](v8) at (2.25,0) {};

\node[draw](v10) at (4.5,3){};
\node[draw](v12) at (3.75,2){};
\node[draw](v13) at (4.5,2){ };
\node[draw](v14) at (3,1){};
\node[draw](v15) at (3.75,1) {};
\node[draw](v16) at (3,0) {};

\path[priority1] (s) -- (v3);
\path[priority1] (s) -- (v4);
\path[priority1] (s) -- (v6);
\path[priority1, bend right] (s) to (v8);
\path[priority2] (v3) -- (v4);
\path[priority1] (v4) --  (v5);
\path[priority2] (v5) -- (v6);
\path[priority1] (v6) -- (v7);
\path[priority2] (v7) -- (v8);
\path[priority1] (v3) -- (v10);
\path[priority1] (v5) -- (v12);
\path[priority1] (v7) -- (v14);
\path[priority1](v8)--(v16);
\path[priority2] (v13) -- (v10);
\path[priority1] (v12) -- (v13);
\path[priority2] (v15) -- (v12);
\path[priority1] (v14) -- (v15);
\path[priority2] (v16) -- (v14);

\path[priority1, bend right] (v16) to (t);
\path[priority1] (v15) -- (t);
\path[priority1] (v13) -- (t);
\path[priority1] (v10) -- (t);
\end{tikzpicture}
\end{subfigure}
\begin{subfigure}{0.48\textwidth}
\centering
\begin{tikzpicture}[every node/.style={circle, minimum size=12pt, inner sep=2pt, font = \footnotesize}]

\node[draw](s) at (-0.25,2) {$s$};
\node[draw](t) at (5.25,2){$t$};
\node[draw](v3) at (0.75,3) {};
\node[draw](v4) at (0.75,2){};
\node[draw](v5) at (1.5,2){ };
\node[draw](v6) at (1.5,1){};
\node[draw](v7) at (2.25,1) {};
\node[draw](v8) at (2.25,0) {};

\node[draw](v10) at (4.5,3){};
\node[draw](v12) at (3.75,2){};
\node[draw](v13) at (4.5,2){ };
\node[draw](v14) at (3,1){};
\node[draw](v15) at (3.75,1) {};
\node[draw](v16) at (3,0) {};

\path[priority2] (s) -- (v3);
\path[priority2] (s) -- (v4);
\path[priority2] (s) -- (v6);
\path[priority1, bend right] (s) to (v8);
\path[priority1] (v3) -- (v4);
\path[priority1] (v4) --  (v5);
\path[priority1] (v5) -- (v6);
\path[priority1] (v6) -- (v7);
\path[priority2] (v7) -- (v8);
\path[priority1] (v3) -- (v10);
\path[priority1] (v5) -- (v12);
\path[priority1] (v7) -- (v14);
\path[priority1](v8)--(v16);
\path[priority2] (v13) -- (v10);
\path[priority1] (v12) -- (v13);
\path[priority2] (v15) -- (v12);
\path[priority1] (v14) -- (v15);
\path[priority2] (v16) -- (v14);

\path[priority1, bend right] (v16) to (t);
\path[priority1] (v15) -- (t);
\path[priority1] (v13) -- (t);
\path[priority1] (v10) -- (t);
\end{tikzpicture}
\end{subfigure}

\caption{Two examples of the 4-double-Braess graph. In both graphs the outgoing edges of $s$ have all costs one, each other edge has costs zero. The red wavy edges have the higher priority. On the left hand side the PoM is equal to the PoA. On the right hand side the PoM lies between the PoS and the PoA.}\label{fig:pom}
\end{figure}
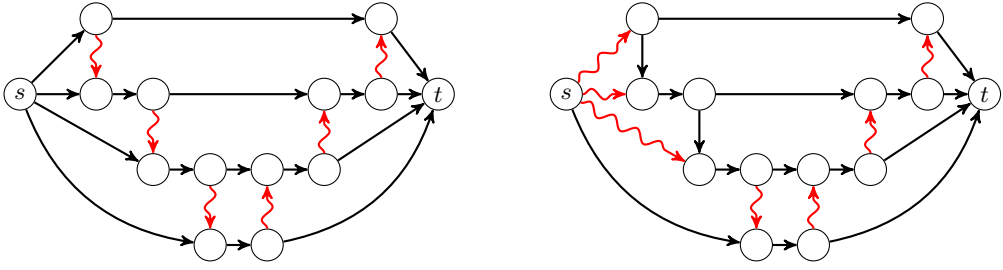

\begin{observation}
There are symmetric routing games with edge priorities, where the Price of Mistrust is greater than the Price of Stability and less than the Price of Anarchy.
\end{observation}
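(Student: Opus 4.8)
The plan is to prove the statement with the single explicit instance on the right of Figure~\ref{fig:pom}, the $4$-double-Braess graph with $k=4$ players; since $\operatorname{PoS}\le\operatorname{PoM}\le\operatorname{PoA}$ holds for every game (the cheapest mistrustful PNE is in particular a PNE, hence no cheaper than the best PNE and no dearer than the worst), one instance with two strict inequalities settles the observation. I would first record $\operatorname{PoS}=1$: as already argued before the statement, the four edge-disjoint parallel paths form a profile of cost $k=4$ that is both socially optimal and a PNE, every player arriving at the earliest feasible time~$1$. Throughout I label the nodes as in the TikZ source of Figure~\ref{fig:pom}.

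For the Price of Anarchy I would exhibit a PNE of cost $\tfrac{k(k+1)}{2}=10$ in which all four players follow the fully protected walk $s,v_6,v_7,v_8,v_{16},v_{14},v_{15},v_{12},v_{13},v_{10},t$, which enters every conflict node through its highest-priority incoming edge (the $s$-edge of the second-from-bottom path, the high-priority down-edge $v_7\to v_8$ into the bottom path, and then the complete back zigzag). As all edges have unit capacity the players queue and arrive at times $1,2,3,4$; stability follows because the leading player sits on every node of this walk already at time~$1$ and always holds the higher-priority incoming edge, so any deviator is blocked at the first merge it reaches. With the general bound $\operatorname{PoA}\le\tfrac{k+1}{2}$ of Theorem~\ref{theo:poa} this yields $\operatorname{PoA}=\tfrac{k+1}{2}=\tfrac52$.

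The core of the proof is the value $\operatorname{PoM}=\tfrac{k+1}{k}=\tfrac54$, i.e.\ the cheapest mistrustful PNE costs exactly~$5$. For the upper bound I would run Algorithm~\ref{algo:pathfinder} with suitable sink tie-breaks: player~$1$ receives the short protected prefix $s,v_6,v_7,v_8,v_{16},t$, which pushes $d(v_{14}),d(v_{15}),d(v_{16})$ to~$2$ and thereby makes every high-priority up-edge of the right cluster unavailable at time~$1$; players~$2$ and~$3$, who act after player~$1$ and whose choices are mistrustful best responses by Observation~\ref{obs:pom_algorithm}, can then safely take the low-priority parallel paths $s,v_4,v_5,v_{12},v_{13},t$ and $s,v_3,v_{10},t$ and still reach $t$ at time~$1$, because the competing high-priority merge edges are now blocked or too slow. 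These three walks are edge-disjoint, so players~$1,2,3$ arrive at time~$1$ and only player~$4$ is forced to time~$2$, for total cost~$5$. For the lower bound I would show that in every mistrustful PNE player~$1$ plays a safe shortest path, that each such path is forced through the prefix $s,v_6,v_7,v_8,v_{16}$ --- at every conflict node the first mover, facing no blocked edge, must use the highest-priority incoming edge, and these choices chain backwards to exactly this prefix --- so that $v_8\to v_{16}$ is occupied at time~$1$; four time-$1$ arrivals would require all four $s$-edges active at time~$0$, yet the player on $s\to v_8$ would then need the occupied edge $v_8\to v_{16}$, so at most $k-1=3$ players reach $t$ at time~$1$ and the cost is at least $3+2=5$. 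Combining the three values gives $1=\operatorname{PoS}<\tfrac54=\operatorname{PoM}<\tfrac52=\operatorname{PoA}$, and the same construction for general $b=k$ yields $\operatorname{PoM}=\tfrac{k+1}{k}$.

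I expect the main obstacle to be the lower bound on $\operatorname{PoM}$, and within it the claim that \emph{every} mistrustful best response of the first mover runs through the protected prefix and hence blocks the bottom path. This needs a clean case analysis showing that no cost-$1$ walk beginning with $s\to v_3$, $s\to v_4$, or $s\to v_8$ can be made undelayable for player~$1$: each such start reaches a merge where the rival high-priority edge is still free at time~$1$ (nothing is blocked yet), so a hypothetical successor could overtake there. Dovetailing this with its converse --- that it is precisely player~$1$'s prefix-blocking that later turns the low-priority parallel paths into safe moves for players~$2$ and~$3$ --- is the delicate point, whereas the underlying $d$- and $\varepsilon$-updates of the algorithm and the edge-disjointness bookkeeping are routine.
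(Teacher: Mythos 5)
Your proposal is correct and follows essentially the same route as the paper: the same right-hand $4$-double-Braess instance of Figure~\ref{fig:pom}, the same best mistrustful profile (the protected prefix $s,v_6,v_7,v_8,v_{16}$ followed by the two still-free parallel paths, total cost $k+1$), and the same worst PNE (follow-the-leader down to the bottom path and up the rear zigzag, cost $k(k+1)/2$). The only difference is one of rigor rather than approach: you establish the lower bound $\operatorname{PoM}\geq\frac{k+1}{k}$ by characterizing \emph{all} mistrustful best responses of Player~1 as walks through that prefix (hence blocking $(v_8,v_{16})$ at time~$1$), whereas the paper only argues that the cost-$k$ parallel-path profile fails to be mistrustful because each parallel path can be blocked by the path below it.
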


\begin{proof}
We use the $b$-double-Braess graph again, but change the priorities. Each outgoing edge of $s$ except the lowest gets the higher priority. The rear zigzag path remains prioritized. The example for the 4-double-Braess graph is depicted one the right side of Figure~\ref{fig:pom}. With this priorities the social optimum and best PNE with cost $k$ is not a mistrustful PNE. The player who chooses the lowest path can be blocked by the player, that chooses the second lowest path. Each other player can be blocked by the player, who chooses the path, which lies directly under its own path.
In the best mistrustful PNE the first player chooses the second lowest edge from $s$. Then it goes one step down and goes directly to $t$. Each following player $i$ chooses the $(i-1)$th lowest horizontal path. The last player follows any of the former players. So $k-1$ players arrive at time step 1 and one player arrives at time step 2, which gives total costs of $k+1$. In the worst PNE, the first player uses the rear zigzag path, after going down from the second lowest to the lowest path. Every other player follows player 1. The overall costs are $\tfrac{k(k+1)}{2}$ again. So with these priorities the PoM lies between the PoS and PoA ($1<\frac{k+1}{k}<\frac{k+1}{2}$, for $k>2$).
\end{proof}

\section{Connection to Earliest Arrival Flows}
\label{sec:eaf}
A socially optimal state of a network with a single source and a single sink and without priority lists is an earliest arrival flow (EAF). Furthermore, if we restrict to unit capacities and follow the same argumentation as in Harks et.\ al.~\cite{HPSV16}, we can show that the socially optimal state of a given symmetric game with edge priorities is an earliest arrival flow, too. Similar to the model with player priorities, there is always a socially optimal solution with edge priorities, where players only wait at the source node in an earliest arrival flow in a network with unit capacities due to the fact that an earliest arrival flow fulfills strong flow conservation. But at the source, the global tie-breaking order of the players applies and there is no priority rules of any entering edges in our model. Thus, both models, our model with edge priorities and the model with player priorities, coincide in this case. 

However, in~\cite{HPSV16} an earliest arrival flow is also always an equilibrium, i.e., the Price of Stability is equal to one. This is not the case in our model with edge priorities as we have seen in Theorem~\ref{theo:pos}. Yet, given a network initially without edge priorities, it may still be possible to find a priority rule on the edges, such that there is an equilibrium that is an earliest arrival flow. In other words, can a network operator always establish a system optimal flow in a network with edge priorities by choosen an appropriate priority list?

Unfortunately, this is not possible in general. We first present a simple example in Fig.~\ref{fig:EAFnoUE_capacities}. Assume, we have four players. Thus, in an EAF the first three players use the lower path by following each other. Hence, the arrival times are 1,~2, and~3. Furthermore, there is capacity left on the arc $(s,v)$ which can be used by Player 4. Consequently, this player starts at time step zero, uses the upper path, and also arrives at time step 3. This is not a PNE, since Player~4 can improve its arrival time by switching to the lower edge after arriving at $v$. Since it arrived before Player~2 at $v$ it can enter the edge before this player and arrives at time step 2. In consequence, Player~2 has to choose a strategy in a PNE which prevents being overtaken. This implies that Player~2 has to use $(s,v)$ at time step zero and waits one time unit at $v$ to enter the lower edge. However, this blocks the capacity for Player~4 at time zero, that is, Player~4 cannot arrive at time step 3 anymore. Summarizing, the arrival times in a PNE are 1, 2, 3, and 4. Note that we did not use any prioritization in this example. Hence, the problem is caused by the FIFO rule and non unit capacities.

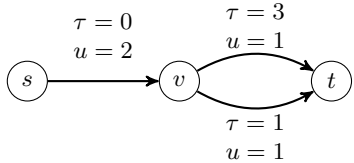
\begin{figure}[ht!]
\centering
\begin{tikzpicture}[every node/.style={circle, minimum size=15pt, inner sep=2pt, font = \footnotesize}]

\node[draw](s) at (0,0) {$s$};
\node[draw](t) at (4,0){$t$};
\node[draw](v) at (2,0) {$v$};

\path[blackedge] (s) -- node[above, align=left]{$\tau = 0$ \\ $u = 2$} (v);
\path[blackedge, bend left] (v) to node[above, rectangle, align=left]{$\tau = 3$ \\ $u = 1$} (t);
\path[blackedge, bend right] (v) to node[below, rectangle, align=left]{$\tau = 1$ \\ $u = 1$} (t);
\end{tikzpicture}
\caption{A small example where any EAF cannot be a PNE. Assume, there are four players. In an EAF they arrive at time steps 1,~2,~3 and 3. In every PNE they arrive at time steps 1, 2, 3 and 4.}
\label{fig:EAFnoUE_capacities}
\end{figure}

Consequently, from now on we only consider networks with unit capacities. Even in such networks, it is not always possible to establish a system optimal earliest arrival flow by appropriate edge priorities. First, note that it is not possible to exchange an edge with high capacity for several parallel edges with unit capacities, since edge priority lists in a network constructed this way create a priority among players using the original edge at the same time, which was not present in the original network. In other words, players using an edge with high capacity at the same all have the same priority and tie breaking by ID applies, whereas players on the multiple copies with unit capacity all have different priorities for entering the subsequent edges. Thus, the example from above with two edges from $s$ to $v$ with unit capacities does not work and we need a more sophisticated construction.

\begin{theorem}\label{thm:noEAF}
 There exist networks with unit capacities where an earliest arrival flow cannot be established by any priority order of the edges as a PNE flow with edge priorities. Further, for arbitrary many players and arbitrary high edge costs the gap of the social costs of an optimal flow (EAF) and the best PNE is arbitrary large.
\end{theorem}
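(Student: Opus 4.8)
I need to construct a family of unit-capacity networks in which (i) the earliest arrival flow (EAF) is *not* a PNE under *any* edge priority assignment, and (ii) the gap between the social cost of the EAF and the cost of the best PNE grows without bound as the number of players and edge costs increase. The discussion preceding the theorem has already ruled out the naive approach — one cannot simulate a high-capacity edge by several parallel unit-capacity edges, because splitting introduces spurious priority distinctions among players who originally shared a tie. So the construction must genuinely live in the unit-capacity world, and the obstruction to the EAF being a PNE must be intrinsic to the network structure rather than to capacities or to the FIFO interaction that powered the example in Figure~\ref{fig:EAFnoUE_capacities}.

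\emph{Step 1: Build a gadget where the EAF forces an early-starting player onto a ``fast but vulnerable'' edge.} The core phenomenon I want to reproduce in a unit-capacity setting is the one from the capacitated example: in the EAF, some player $p$ starts early and uses a short route, arriving at the same time as another player $q$ who starts later via a longer route; but then $q$ could overtake $p$ on a shared final edge (arriving earlier than in the EAF), so in any PNE the players must rearrange, and this rearrangement strictly increases total cost. To get this without relying on a single high-capacity arc, I would use several distinct short $s$--$t$ routes that reconverge onto a common bottleneck edge near $t$; FIFO on that bottleneck edge, together with the fact that at $t$ there is no priority rule among incoming edges, is what creates the overtaking opportunity. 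The key is to arrange costs so that the EAF's simultaneous arrivals are exactly the configurations a later player can exploit.

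\emph{Step 2: Argue the obstruction is priority-independent.} This is the part I expect to be the main obstacle, and it is where the theorem is genuinely stronger than Theorem~\ref{theo:pos}. I must show the EAF fails to be a PNE for \emph{every} choice of $\pi$, not just for one bad assignment. The strategy is to make the deviating move of the offending player independent of any priority decision — i.e.\ the overtaking happens either at the sink $t$ (where by assumption there is no priority among incoming edges) or via FIFO on a shared edge, so that no reassignment of $\pi$ can protect the EAF player. Concretely, I would identify two players who in the EAF enter a common edge $e=(v,w)$ from the \emph{same} preceding edge, so that FIFO — not $\pi$ — decides the order, and show that whichever way the EAF schedules them, the later-arriving improvement is available. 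Because Lemma~\ref{lemma:zeitpunkte} pins down that in any PNE the arrival order equals the start order with gaps at most one, I can compare the forced PNE arrival-time vector against the EAF vector and read off a strict loss. The delicate point is ensuring that \emph{no} priority list can simultaneously prevent all the overtakes: I would make the vulnerable player's route share its critical edge with the exploiter coming from the identical incoming edge, neutralizing $\pi$ entirely on that conflict.

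\emph{Step 3: Scale the gadget to make the gap unbounded.} For the second claim I would chain or replicate the gadget: introduce a parameter (a long tail of unit-cost edges, or nested copies of the basic gadget) so that the EAF keeps $k$ players arriving within a short window while the unique PNE spreads them out. Using the $\ell + (i-1)$ bound from the proof of Theorem~\ref{theo:poa} as the template, I expect the EAF cost to stay near $k\ell$ while the best PNE cost grows like $k\ell + \Theta(k^2)$ or, by inflating edge costs $\tau$, the \emph{additive} gap $C(\text{PNE}) - C(\text{EAF})$ can be driven arbitrarily high for any fixed $k\ge$ some threshold. The cleanest route is to take the single gadget, verify the EAF-versus-PNE cost difference is at least some positive quantity depending on the edge-cost scale, and then observe this difference scales linearly with both the number of replicated gadgets (hence players) and the magnitude of the transit times, giving an unbounded gap. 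I would present one explicit small network with a figure, compute both the EAF arrival vector and the PNE arrival vector by hand, confirm the strict inequality under every priority order, and then state the scaling as a parametrized extension.
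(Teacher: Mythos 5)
Your proposal is an outline whose hard core---Step 2---is exactly the part you never carry out, and the mechanism you propose for it does not hold up. First, there is no such thing as overtaking ``at the sink'': in this model conflicts arise only when players enter an edge, and reaching $t$ is just the completion of the last edge, so simultaneous arrivals at $t$ via different edges never conflict at all. The conflicts you would actually face sit at the entrances of the edges $(v,t)$, and those are governed by $\pi$ like any other edge, so one of your two supposedly $\pi$-free mechanisms is vacuous. Second, the FIFO-only mechanism runs into a reassignment problem you never address. The theorem requires that \emph{no} PNE induces an EAF pattern, so you must rule out \emph{every} player-to-path and player-to-slot assignment realizing that pattern, not one fixed profile. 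A FIFO overtake in a unit-capacity network merely permutes which player gets which slot on the bottleneck edge; the natural defence is to hand the early slot to the would-be deviator in the first place, and with unit capacities this defence is free, since it is a pure role swap that preserves the arrival multiset. This is precisely where your template, the capacitated example of Figure~\ref{fig:EAFnoUE_capacities}, uses capacity $2$ in an essential way: there the victim's defensive move (entering $(s,v)$ at time $0$) burns a slot that a fourth player needs, so every deviation-proof profile ceases to be an EAF. You offer no replacement for that destructive effect in the unit-capacity world. Finally, Lemma~\ref{lemma:zeitpunkte} cannot ``read off a strict loss'': it only bounds PNE arrival gaps from above, and EAF arrival patterns automatically satisfy that bound (the per-step increments of the max-flow-over-time value are non-decreasing, so EAF arrivals are gap-free); to separate the EAF from a PNE you need \emph{lower} bounds on PNE arrival times, which the lemma does not provide.

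The paper does something different and, in this respect, easier: it makes no attempt to neutralize $\pi$. It exhibits a concrete seven-node network (Figure~\ref{fig:EAFnoUE}) with a cost parameter $M$ in which the EAF, computed by successive shortest paths, must use residual (backward) edges; the resulting interleaving of five paths creates exactly one priority decision that matters, namely the order of $(v_1,v_2)$ versus $(v_5,v_2)$ in the list of the edge $(v_2,t)$. A two-case analysis then shows that either choice forces a PNE whose cost exceeds the EAF cost by $M-5$, which is unbounded in $M$ (taking $k>3M$ players). So the quantification over all priority orders is discharged by reducing it to finitely many relevant cases rather than by constructing $\pi$-independent deviations, and the unbounded gap comes from the cost parameter $M$, not from the $\Theta(k^2)$ scaling you speculate about. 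If you want to salvage your plan, the missing idea to import is the use of backward edges in the EAF computation as the structural source of the obstruction, combined with an explicit, finite case analysis over the few priority lists that can matter in your gadget.
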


\begin{proof}
Consider the graph depicted in Figure \ref{fig:EAFnoUE} with $k>3M$ players and unit capacities. Edges are labeled with travel times where $M>6$. One can now use a successive shortest path algorithm on the residual network, where backward edges have negative costs, to determine the EAF. This yields three paths, namely  $(s,v_1,v_2,t)$ with length 3, $(s,v_5,v_2,v_1,v_4,t)$ with length 5, and $(s,v_3,v_1,v_2,v_5,t)$ with length $M+1$. Thus, the first player arrives at $t$ at time 3. From time step 5 on, two players reach $t$ per time step. Beginning with time step $M+1$, three players arrive at $t$ simultaneously. Reconstructing the actual paths, i.e., paths without backward edges, yields five different paths which are used by players.

Now, we try to construct edge priority lists which yield the same unique EAF. Since $v_1$ and $v_2$ are the only nodes with two incoming edges besides $t$, priority is only of interest there. It is easy to see that priority at $v_1$ has no influence on the total value of the solution, only the path decomposition is slightly affected. Thus, the crucial edge is $e_1=(v_2,t)$ and we can achieve two different PNEs, depending on the priority list of $e_1$. 
If the edge $(v_1,v_2)$ is in front of $(v_5,v_2)$ in the priority list of $e_1$, the players on the path $(s,v_1,v_2,t)$ can never be displaced. Since it is the shortest path and it is absolutely a save choice, the PNE will consist of players using the paths $P_1=(s,v_1,v_2,t)$, $P_2=(s,v_3,v_1,v_4,t)$, and $P_3=(s,v_5,t)$.
Contrary, if the edge $(v_5,v_2)$ is preferred over $(v_1,v_2)$ in the priority list of $e_1$, players will only use the paths $P_4=(s,v_5,v_2,t)$ and $P_5=(s,v_1,v_4,t)$ in the PNE, i.e., at most two players arrive at $t$ per time step. For $k>3M$, this is always worse than the first equilibrium.

Now, let us compare the costs of the EAF and the best PNE. The arrival times of the EAF were already stated above. In the PNE, the players arrive after time 3 on $P_1$, after time 6 on $P_2$ and after time $M$ on $P_3$. In total, this yields a difference of $C(PNE)-C(EAF)=M-5$, which goes to infinity for $M \to \infty$.
\begin{figure}[ht!]
\centering
\begin{tikzpicture}[every node/.style={circle, minimum size=15pt, inner sep=2pt, font = \footnotesize}]

\node[draw](s) at (0,3) {$s$};
\node[draw](t) at (6,3){$t$};
\node[draw](v1) at (2,3) {$v_1$};
\node[draw](v2) at (4,3){$v_2$};
\node[draw](v3) at (1,1) {$v_3$};
\node[draw](v4) at (4,1){$v_4$};
\node[draw](v5) at (3,5) {$v_5$};

\path[blackedge] (s) -- node[above]{$0$} (v1);
\path[blackedge] (v1) -- node[above]{$3$} (v2);
\path[blackedge] (v2) -- node[above]{$0$} (t);
\path[blackedge] (s) -- node[above]{$0$} (v5);
\path[blackedge] (s) -- node[below]{$0$} (v3);
\path[blackedge] (v3) -- node[below]{$2$} (v1);
\path[blackedge] (v1) -- node[below]{$0$} (v4);
\path[blackedge] (v4) -- node[below]{$4$} (t);
\path[blackedge] (v5) -- node[above]{$4$} (v2);
\path[blackedge] (v5) -- node[above]{$M$} (t);

\end{tikzpicture}
\caption{An EAF which cannot be a PNE.}
\label{fig:EAFnoUE}
\end{figure}
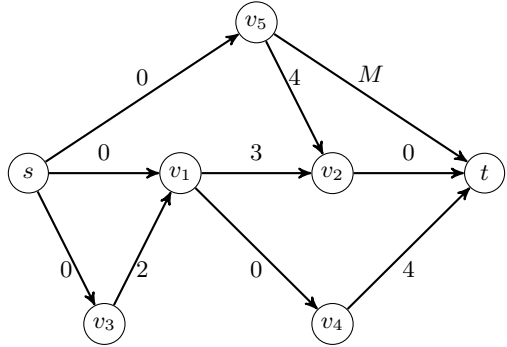
\end{proof}

Yet, we conjecture that it is possible to determine optimal priority rules for some graph classes like series parallel graphs or outerplanar graphs. We present an algorithmic idea how to construct a global priority list of the edges during the computation of an earliest arrival flow. Whenever this construction leads to a feasible global list, we conjecture that there exists an earliest arrival flow that is a mistrustful equilibrium and this equilibrium can be computed by Algorithm~\ref{algo:pathfinder} for the right choice of the edges in $t$.

The main idea is as follows. The algorithm of Wilkinson~\cite{wilkinson1971algorithm} for the construction of a discrete earliest arrival flow uses a \emph{static residual network}. In such a network, the algorithm computes shortest paths until the length of a shortest path exceeds a given time horizon or until the capacity of the network is exhausted. Here, we keep up the shortest path computations until the capacity of the static network is completely used.
%
%
 We construct the global priority list by considering the paths computed by the Wilkinson-Algorithm in order of their occurrence.
For a specific path, we look at an edge $e$ and the subsequent edge $e'$ in this path. Now, we distinguish two cases. If $e'$is a forward edge in the path, we add $e$ at the end of the global priority list if $e$ is used for the first time. Whereas, if the edge $e$ is the last forward edge in front of the backward edge $e'$, we add the edge $e$ in front of $e'$ in the priority list. In all other cases, nothing is done.
 
A list is called a feasible list, if each edge occurs at most once in the global priority list. All edges not appearing in the list are added to the end of the list.
If an edge appears twice in a list, we call the list infeasible. 

\begin{lemma}\label{theo:eaf_series-parallel}
On series-parallel networks with unit capacities, we always compute a feasible list $\pi$.
\end{lemma}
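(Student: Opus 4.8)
The plan is to show that an infeasible list can only arise if the network contains a subdivision of $K_4$, which is impossible in a two-terminal series-parallel graph. I would start by recording precisely how an edge enters the global list. Wilkinson's procedure produces a sequence of shortest $s$–$t$ paths $Q_1,Q_2,\dots$ in the static residual network, each of which may be taken to be simple, and an edge $e=(a,b)$ is written into the list in exactly two situations: it is appended to the tail when $e$ is immediately followed by a \emph{forward} edge (and is not yet present), or it is spliced directly in front of the edge $f$ when $e$ is the last forward edge preceding the residual edge $\bar f$ of $f$. A list is therefore infeasible precisely when some edge undergoes two of these writes, leaving only two scenarios: (I) $e$ is both appended and spliced, and (II) $e$ is spliced in front of two distinct edges $f_1\neq f_2$. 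In either scenario the edges involved all point into the common head $b$, so the conflict is localized at a single vertex. To justify looking only at one head I would first record the easy confinement facts that in a parallel split every simple residual $s$–$t$ path, including those using backward edges, stays inside one part, and in a series split every such path crosses the join vertex exactly once; these show that all insertion events except those at a series-join vertex are inherited from strictly smaller series-parallel subnetworks.

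Next I would extract a forbidden structure from each scenario. In scenario (II) there are three distinct edges $e=(a,b)$, $f_1=(c_1,b)$, $f_2=(c_2,b)$ entering $b$, and the augmenting paths responsible for the two splices carry $b$ and the tails $a,c_1,c_2$ both back to $s$ and onward to $t$. Taking $b$ together with three further branch vertices chosen among $\{s,t,a,c_1,c_2\}$, I would read off four internally disjoint connecting paths and thereby a $K_4$-subdivision; scenario~(I) is handled the same way, with the outgoing edge of the Case-1 event playing the role of the third strand leaving $b$. This mirrors exactly the $K_4$ one finds by suppressing the degree-two vertices of the non-series-parallel network in Figure~\ref{fig:EAFnoUE}, where the construction indeed produces a double insertion.

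The step I expect to be the main obstacle is making the extraction of \emph{internally disjoint} connecting paths rigorous. Successive shortest augmenting paths overlap heavily and reuse edges, and backward edges record cancellations rather than genuine routes, so one cannot read the six branch-connecting paths off the $Q_i$ directly. The clean remedy is to pass to the symmetric difference of the flows present before and after the two conflicting augmentations: this symmetric difference decomposes into edge-disjoint simple paths and cycles in the underlying undirected graph, from which disjoint connectors between the chosen branch vertices can be selected. I would phrase the whole argument so that series-parallelity is used only through its defining property, the absence of a $K_4$-subdivision, so that the base case (a single edge, whose residual network admits one forward path, no backward edge, and hence the one-element feasible list) together with the localization above suffices to conclude that on every series-parallel network with unit capacities the constructed list $\pi$ is feasible.
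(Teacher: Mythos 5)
Your plan takes a genuinely different route from the paper, and it has a fatal gap at its center. The paper's own proof is a two-line reduction: it cites Ruzika et al.~\cite{Ruzika11}, who show that on series-parallel networks an earliest arrival flow is obtained by greedily sending flow along shortest paths as long as free capacity remains. Consequently the Wilkinson computation never uses a backward edge at all; since a second insertion of an edge into $\pi$ can only be a splice in front of a backward edge (the Case-1 append is, by definition, performed only on first use), no edge is ever inserted twice. Your proposal never invokes this fact and therefore must confront flow cancellations directly, which is exactly where it fails.

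The concrete failure is your claim that an infeasible list forces a $K_4$-subdivision \emph{in the network}. Your case analysis omits a third scenario: $e=(a,b)$ spliced twice in front of the \emph{same} backward edge $f=(c,b)$, which unit capacities permit ($f$ forward; $e$ forward, $f$ backward --- first splice; $e$ backward; $f$ forward; $e$ forward, $f$ backward --- second splice). The connections this forces are exactly $s$--$a$, $a$--$t$, $s$--$c$, $c$--$t$, $a$--$b$, $c$--$b$, i.e.\ a subdivided $K_{2,3}$ with parts $\{a,c\}$ and $\{s,b,t\}$ --- precisely the structure of Figure~\ref{fig:K23}. This graph has only two vertices of degree three, so it contains \emph{no} $K_4$-subdivision; your target forbidden structure simply is not there. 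What the structure violates is the \emph{two-terminal} property: $G$ is series-parallel with terminals $s,t$ if and only if $G$ plus the edge $(s,t)$ has no $K_4$-minor, and only after adding $st$ does this $K_{2,3}$ produce a $K_4$. So the argument must be run in $G+st$ (or directly on the series-parallel decomposition), not in $G$. Moreover, your proposed remedy for the disjointness obstacle does not work: the symmetric difference of two flows decomposes into \emph{edge}-disjoint paths and cycles, whereas a subdivision needs \emph{internally vertex-disjoint} branch paths, and successive augmenting paths really can share interior vertices. That unresolved disjointness issue is precisely why the paper states the analogous $K_{2,3}$-based argument for outerplanar graphs only as Conjecture~\ref{theo:eaf_outerplanar}, with an ``idea of proof'' rather than a proof.
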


\begin{proof}
Ruzika et al.~\cite{Ruzika11} have shown that one can find an earliest arrival flow in a series-parallel network by sending players on shortest paths as long as there is free capacity. Thus, a backward edge is never used in the computation of the earliest arrival flow. Since this is the only possibility to get an infeasible list in our construction above, the described procedure always works successfully for series-parallel network.
\end{proof}

\begin{conjecture}\label{theo:eaf_outerplanar}
On outerplanar graphs with unit capacities, we always compute a feasible list $\pi$.
\end{conjecture}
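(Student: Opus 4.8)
The plan is to follow the philosophy of the proof of Lemma~\ref{theo:eaf_series-parallel}, but the decisive new feature is that outerplanar networks, \emph{with respect to the prescribed terminals} $s$ and $t$, need no longer be two-terminal series-parallel: already the acyclic orientation of $K_4$ minus one edge, with the two non-adjacent vertices as source and sink, is outerplanar but fails to be series-parallel (adding the edge $st$ recreates a $K_4$). Hence the earliest arrival flow computed by Wilkinson's algorithm~\cite{wilkinson1971algorithm} may genuinely have to reroute flow along backward edges, so the shortcut of Ruzika et al.~\cite{Ruzika11} that no backward edge is ever used is unavailable. The whole task therefore becomes to show that backward edges, although now present, can never force the construction to insert an edge into the global list twice.

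First I would pin down the only two ways an edge $e$ can be inserted, and thus the only two ways feasibility can fail. For each occurrence of $e$ in an augmenting path with successor $e'$, exactly one rule applies: if $e'$ is forward then $e$ is appended once at the end, guarded by the ``first use'' condition; if $e'$ is backward then $e$ is inserted in front of the forward edge $\overline{e'}$ that $e'$ reverses. Consequently the list is infeasible precisely when either \textbf{(A)} some edge $e$ occurs once with a forward successor and once with a backward successor, or \textbf{(B)} some edge $e$ occurs twice with a backward successor. The goal is reduced to excluding \textbf{(A)} and \textbf{(B)} on outerplanar unit-capacity networks.

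Next I would fix an outerplanar embedding with all vertices on the outer face and exploit its tree-like structure: the blocks form a tree, the $s$-$t$ flow passes through the fixed sequence of blocks on the $s$-$t$ path of this block tree, and inside each $2$-connected block the bounded faces, glued along their chords, have a weak dual that is again a tree. Within this embedding I would prove that the augmenting paths produced while residual capacity remains, together with the edges they reverse, form a \emph{non-crossing} (laminar) family of $s$-$t$ walks. A backward step, consisting of a forward edge $e=(w,y)$ followed by the reverse $e'=\overline{f}$ of some $f=(x,y)$, is a point where a new path enters $y$ and immediately peels off flow that an earlier path had routed into $y$; the non-crossing property then forces these peel-off points to appear in a monotone order along the boundary. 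From this monotonicity I would derive that a fixed forward edge $e$, whose head $y$ is fixed, can be the immediate predecessor of a backward edge at most once, excluding \textbf{(B)}, and that at the single iteration where it pivots such a reroute it already carries rerouted rather than first-time forward flow, excluding \textbf{(A)}.

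The part I expect to be the main obstacle is exactly this non-crossing and monotonicity claim for the residual shortest-path augmentations, together with its conversion into the exclusion of \textbf{(A)} and \textbf{(B)}. The configurations producing a double insertion are precisely those in which two internally disjoint reroutes attach to a common forward edge, and I would try to show that any such configuration embeds a subdivision of $K_{2,3}$ between the branching created at $y$ and the merging on the source side --- which is the single forbidden minor separating outerplanar graphs from the larger $K_4$-minor-free class handled in Lemma~\ref{theo:eaf_series-parallel}. If a clean laminarity proof resists a direct induction along the block and ear decomposition, the fallback is a case analysis on how a shortest augmenting path can cross a fixed chord of the outer cycle, using planarity to bound how often that chord can serve as the pivoting forward edge. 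This is also consistent with the counterexample of Figure~\ref{fig:EAFnoUE}, which contains a $K_4$ minor and is therefore not even series-parallel, so it lies outside the hypothesis exactly where the obstruction would otherwise appear.
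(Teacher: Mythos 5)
Your plan ultimately rests on the same obstruction as the paper's own (explicitly incomplete) ``Idea of proof'': a double insertion forces a subdivision of $K_{2,3}$, which by Sys\l o's characterization~\cite{syslo1979characterizations} cannot exist in an outerplanar graph; and your failure modes \textbf{(A)} and \textbf{(B)} both reduce, exactly as the paper observes in condensed form, to a second forward use of an edge $e=(u,v)$ whose successor $e_1=(w,v)$ is a backward edge. Where you genuinely differ is the route to the $K_{2,3}$. You interpose a fixed outerplanar embedding, the block tree, the weak dual, and a laminarity/monotonicity property of the augmenting paths --- and you yourself flag that claim as the main obstacle. The paper needs none of that apparatus: it reads the required connections straight off what residual usage certifies, namely $e$ used forward gives an $s$--$u$ connection, $e$ used backward gives a $u$--$t$ connection avoiding $e$, $e_1$ used backward means it was used forward earlier, giving an $s$--$w$ connection, and the tail of the augmenting path behind $e_1$ gives a $w$--$t$ connection; these four connections together with $e$ and $e_1$ form the $K_{2,3}$ on branch sets $\{u,w\}$ and $\{s,v,t\}$, with no embedding or planarity argument at all. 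What your heavier machinery would buy, if it could be completed, is precisely the internal disjointness of those connections, which the paper's sketch silently assumes --- that gap is why the statement is labelled a conjecture rather than a lemma, and it is also exactly where your own plan stalls, so you have traded a short incomplete argument for a long incomplete one ending at the same point. One wording slip: as undirected graphs, outerplanar graphs are a \emph{subclass} of the $K_4$-minor-free (series-parallel) graphs, so ``the larger $K_4$-minor-free class'' is only accurate for the two-terminal notion used in Lemma~\ref{theo:eaf_series-parallel}; your underlying observation --- that an outerplanar network with prescribed $s$ and $t$ need not be two-terminal series-parallel, so the shortcut of Ruzika et al.~\cite{Ruzika11} is unavailable and backward edges must be confronted --- is correct and is the right motivation for the whole argument.
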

 
\begin{proof}[Idea of proof]
First, note that in each step where the algorithm of Wilkinson computes a shortest path in the residual network, there are three possibilities for an edge: 
  \begin{enumerate}
  \item the edge is used for the first time in forward direction,
  \item the edge is used as a backward edge, or 
  \item the edge is used in forward direction, but it was already used before.  
  \end{enumerate}
  
 Obviously, each case can only occur if the preceding case has already occurred for this edge. Moreover, since we consider a network with unit capacities, the usage as a forward edge or as backward edge alternates. Thus, a contradiction, i.e., an edge $e=(u,v)$ inserted for the second time in the priority list, can only occur, when we use this edge at least for the second time in forward direction and the subsequent edge $e_1=(w,v)$ is used as backward edge.

 
 This implies the following connections (edges or paths) in the graph $G$:
 
 \begin{itemize}
 \item Since $e=(u,v)$ was used as a forward edge, there is a connection $s-u$ in the graph.
 \item Since $e=(u,v)$ was used as a backward edge, there is a connection $u-t$ that does not use $e$.
 \item Since $e_1=(w,v)$ is used as a backward edge now, it was used as a forward edge before. Thus, there is a connection $s-w$.
 \item Since $e_1$ is used as a backward edge and $e$ is the last forward edge before, there is a connection $w-t$. 
 \end{itemize}

 \begin{figure}[ht!]
 \centering
 \begin{tikzpicture}[every node/.style={circle, minimum size=15pt, inner sep=2pt, font = \footnotesize}]
 
 \node[draw](s) at (0,2) {$s$};
 \node[draw](t) at (6,2){$t$};
 \node[draw](u) at (2,2) {$u$};
 \node[draw](v) at (4,2){$v$};
 \node[draw](u1) at (4,4) {$w$};

 \draw[draw,thick,-stealth', draw=black, dotted] (s) -- (u);
 \path[blackedge] (u) -- (v);
 \path[blackedge] (u1) -- (v);
 \draw[draw,thick,-stealth', draw=black, dotted] (s) -- (u1);
 \draw[draw,thick,-stealth', draw=black, dotted] (u1) --  (t);
 \draw[blackedge, bend angle=30, bend right, dotted] (u) to  (t);

 \end{tikzpicture}
 \caption{A conflict in $\pi$ leads to a subgraph homeomorphic to $K_{2,3}$}
 \label{fig:K23}
 \end{figure}
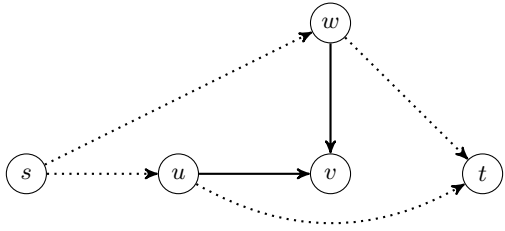
 
 Drawing these connections yields the graph depicted in Figure~\ref{fig:K23}. The underlying undirected graph is a $K_{2,3}$ (consider the sets $\{ u, w \}$ and $\{ s,v,t\}$).
 Sys\l o~\cite{syslo1979characterizations} has shown that a graph with a subgraph homeomorphic to $K_{2,3}$ can never be outerplanar. Thus, we always get a feasible $\pi$ for every outerplanar graph.
 \end{proof}

\begin{conjecture}\label{prop:priolist}
Let $G=(V,E)$ be either a series-parallel network or an outerplanar graph. Furthermore, $G$ has unit capacities. Let $\pi(G)$ be the global priority list that was constructed by the approach above. Then there exists an earliest arrival flow which is a mistrustful equilibrium in the game $\mathcal{G}=(G,N,\tau,u,\pi(G))$ for any number of players $k=|N|$. This equilibrium can be computed by Algorithm~\ref{algo:pathfinder}.
\end{conjecture}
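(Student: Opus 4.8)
The plan is to reduce the statement to a single claim: for an appropriate choice of the incoming edge of $t$, the state computed by Algorithm~\ref{algo:pathfinder} under the priorities $\pi(G)$ realises, at every point in time, the same cumulative number of arrivals at $t$ as the earliest arrival flow. Once this is established the conclusion is immediate: a state with this property is an earliest arrival flow by definition, and by Theorem~\ref{theo:algo:pathfinder} every output of the algorithm is a PNE, which by Observation~\ref{obs:pom_algorithm} is moreover mistrustful. Thus stability needs no separate treatment, and the entire content lies in matching the algorithm's output to the EAF arrival curve. Note that this is exactly the point where Theorem~\ref{thm:noEAF} has to be avoided: the special structure of $\pi(G)$ must rule out the self-blocking behaviour that, for general priorities, forces the best PNE strictly above the EAF.

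I would prove the matching by induction on the player index $i$. Let $a_1 \le a_2 \le \dots \le a_k$ be the arrival times of the EAF, so that $a_i$ is the $i$-th time a flow unit reaches $t$. Realising the EAF with waiting only at the source—possible because an EAF on unit capacities satisfies strong flow conservation, as noted at the start of Section~\ref{sec:eaf}—fixes, after players $1,\dots,i-1$ have been placed on EAF paths, a residual capacity pattern that the algorithm tracks through its labels $d$ and $\varepsilon$. The inductive claim is that in the $i$-th pass the label $d(t)$ equals $a_i$ and that the backward search constructs an $s$-$t$ path belonging to a fixed path decomposition of the EAF. The lower bound $d(t)\ge a_i$ follows from maximality of the EAF together with Lemma~\ref{lemma:zeitpunkte}: player $i$ is the $i$-th to reach $t$, so an arrival strictly before $a_i$ would give $i$ arrivals before a time at which the EAF (being cumulatively maximal) has only $i-1$, a contradiction. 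The upper bound, that the search actually attains $a_i$ along an EAF path, is where the construction of $\pi(G)$ is used.

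The heart of the argument, and the step I expect to be the main obstacle, is to show that the priority-guided backward search in line~\ref{line:*} reproduces the EAF path decomposition rather than some other family of shortest walks. Here the two graph classes diverge. On series-parallel networks Lemma~\ref{theo:eaf_series-parallel} guarantees that the EAF uses no backward edges, so $\pi(G)$ is built purely from first forward uses; the greedy rule ``earliest feasible incoming edge of highest priority'' then selects edges in exactly the order in which the successive-shortest-path computation committed them, and a direct induction shows the reconstructed walk is an EAF path. On outerplanar graphs backward edges genuinely occur, and the delicate point is to translate a residual reroute along a backward edge $e_1=(w,v)$ preceded by the last forward edge $e=(u,v)$ into a forward player path. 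The priority insertion that places $e$ in front of $e_1$ is designed precisely so that, at the shared head $v$, the backward search prefers $e$ over $e_1$ and therefore traces the post-rerouting forward flow. I would have to verify that this local rule, applied consistently across all reroutes, makes the search trace out exactly the final forward decomposition of the EAF; the feasibility of $\pi(G)$ secured by Conjecture~\ref{theo:eaf_outerplanar} (equivalently, the absence of the $K_{2,3}$ configuration of Figure~\ref{fig:K23}) is what forbids two conflicting priority demands at one node, so the greedy search never blocks itself and always finds an EAF-consistent predecessor.

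Finally I would close the induction together with the choice of the $t$-edge. Since priority plays no role among the incoming edges of $t$, their processing order is a free parameter of the algorithm, and I would fix it to agree with the order in which the EAF saturates these edges; this pins down the cumulative number of arrivals by every time $\theta$ to that of the EAF. By the lower bound the algorithm's profile cannot beat the EAF, while the inductive construction attains each $a_i$ in turn, so the two arrival curves coincide. Consequently the computed mistrustful PNE realises the EAF and is therefore an earliest arrival flow, which is precisely the assertion. The principal risk in carrying this out is the outerplanar reroute bookkeeping; should the clean correspondence between backward-edge insertions and forward player paths fail, one would fall back on an explicit analysis of the (necessarily acyclic, $K_{2,3}$-free) structure of EAF reroutes on outerplanar graphs.
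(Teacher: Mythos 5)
The statement you set out to prove is stated in the paper as a \emph{conjecture}: the authors give no proof of it at all, so your proposal has to stand entirely on its own---and as written it does not. Your overall plan (show that, for a suitable processing order of the edges into $t$, Algorithm~\ref{algo:pathfinder} under $\pi(G)$ reproduces the EAF arrival curve, then invoke Theorem~\ref{theo:algo:pathfinder} and Observation~\ref{obs:pom_algorithm} to get stability and mistrustfulness for free) is the natural one, and your lower-bound argument $d(t)\ge a_i$ is sound. But the step you yourself call ``the heart of the argument''---that the priority-guided backward search in line~\ref{line:*} traces out a path decomposition of the EAF---is never actually established: for series-parallel networks you assert that ``a direct induction shows'' it, and for outerplanar graphs you explicitly defer the verification. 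Since this matching claim \emph{is} the content of the conjecture, what you have is a proof plan, not a proof.

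More seriously, the way you propose to close the outerplanar case cannot work as stated. You argue that feasibility of $\pi(G)$ (no edge inserted twice, i.e., absence of the $K_{2,3}$ configuration of Figure~\ref{fig:K23}) ``forbids two conflicting priority demands at one node, so the greedy search never blocks itself and always finds an EAF-consistent predecessor.'' The paper's own example in Figure~\ref{fig:noEAFbutPriorities} refutes exactly this implication: there the construction yields a feasible list, yet \emph{no} priority rule whatsoever---in particular not $\pi(G)$---makes any EAF a PNE; the arrival pattern $3,4,5,5,6,6,7,7,7,\ldots$ is unattainable under either ordering of $(v_1,v_2)$ and $(v_3,v_2)$ at the critical node $v_2$. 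So any argument that derives EAF-consistency from feasibility of the list alone proves too much; series-parallelness or outerplanarity must enter in a structurally essential way beyond guaranteeing feasibility, and your sketch gives no indication of what that way is. Two further soft spots: your outerplanar branch leans on Conjecture~\ref{theo:eaf_outerplanar}, which is itself only conjectured (with an idea of proof) in the paper; and your final step fixes the processing order of the edges into $t$ to match the EAF, whereas the paper's discussion leaves open whether a fixed order of these edges suffices or a time-varying one is needed---so even this ``free parameter'' requires an argument rather than a stipulation.
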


Note that there is an example, which is neither a series-parallel network nor an outerplanar graph, where the construction always leads to a feasible list but there is no priority rule that assures a system optimal PNE. This example is given in Figure~\ref{fig:noEAFbutPriorities}. The given three paths represent the unique run of the successive shortest path algorithm. It is obvious that this run lead to a feasible priority rule, since every edge is used at most twice. The arrival pattern of the EAF is $3,4,5,5,6,6,7,7,7, \ldots$. This EAF cannot be established by any priority rule. Assume, we prioritize edge $(v_3,v_2)$ over $(v_1,v_2)$. Then, Player~2 has to use the path $s, v_3, v_2, t$ to achieve arrival time~4, otherwise she would be blocked by Player~3. Because of this choice, only one player can reach $t$ at time step 5. Thus, this is not an EAF. If we prioritize $(v_1,v_2)$ over $(v_3,v_2)$ the edge $(v_1,t)$ will never be used, since the path $s,v_1,v_2,t$ is always the shorter alternative. Thus, at most two players arrive at $t$ at each time step, which is also not an EAF.

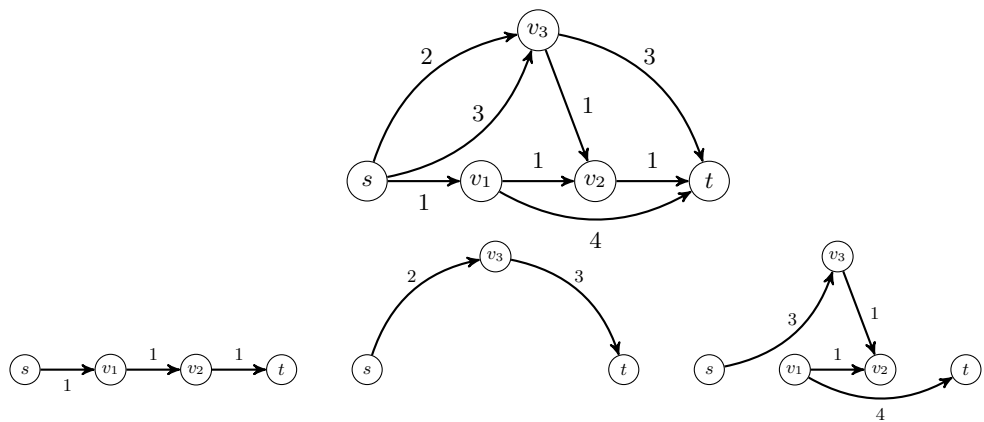
\begin{figure}[ht!]
 \centering
 \begin{tikzpicture}[every node/.style={circle, minimum size=15pt, inner sep=2pt, font = \footnotesize, transform shape}]
 
 \node[draw](s) at (0,0) {$s$};
 \node[draw](t) at (4.5,0){$t$};
 \node[draw](v1) at (1.5,0) {$v_1$};
 \node[draw](v2) at (3,0){$v_2$};
 \node[draw](v3) at (2.25,2) {$v_3$};

 \path[blackedge] (s) -- node[below]{$1$} (v1);
 \path[blackedge] (v1) -- node[above]{$1$} (v2);
 \path[blackedge] (v2) -- node[above]{$1$} (t);
 \path[blackedge, bend left] (v3) to node[above]{$3$} (t);
 \path[blackedge, bend right] (v1) to node[below]{$4$} (t);
 \path[blackedge, bend left] (s) to node[above]{$2$} (v3);
 \path[blackedge, bend right] (s) to node[above]{$3$} (v3);
  \path[blackedge] (v3) -- node[right]{$1$} (v2);
  
\begin{scope}[xshift= -4.5cm, yshift=-2.5cm, scale=0.75]
 \node[draw](s) at (0,0) {$s$};
 \node[draw](t) at (4.5,0){$t$};
 \node[draw](v1) at (1.5,0) {$v_1$};
 \node[draw](v2) at (3,0){$v_2$};

 \path[blackedge] (s) -- node[below]{$1$} (v1);
 \path[blackedge] (v1) -- node[above]{$1$} (v2);
 \path[blackedge] (v2) -- node[above]{$1$} (t);
\end{scope}

\begin{scope}[yshift=-2.5cm, scale=0.75]
 \node[draw](s) at (0,0) {$s$};
 \node[draw](t) at (4.5,0){$t$};
 \node[draw](v3) at (2.25,2) {$v_3$};
 \path[blackedge, bend left] (v3) to node[above]{$3$} (t);
 \path[blackedge, bend left] (s) to node[above]{$2$} (v3);
\end{scope}

\begin{scope}[xshift=4.5cm, yshift=-2.5cm, scale=0.75]
 \node[draw](s) at (0,0) {$s$};
 \node[draw](t) at (4.5,0){$t$};
 \node[draw](v1) at (1.5,0) {$v_1$};
 \node[draw](v2) at (3,0){$v_2$};
 \node[draw](v3) at (2.25,2) {$v_3$};

 \path[blackedge] (v1) -- node[above]{$1$} (v2);
 \path[blackedge, bend right] (v1) to node[below]{$4$} (t);
 \path[blackedge, bend right] (s) to node[above]{$3$} (v3);
 \path[blackedge] (v3) -- node[right]{$1$} (v2);
\end{scope}
\end{tikzpicture}
\caption{Network which has no priority rule that assures a system optimal PNE. The three given paths are the unique run of the successive shortest path algorithm. This run yields a feasible priority list.}\label{fig:noEAFbutPriorities}
\end{figure}

\section{Discussion}
\label{sec:disc}

Motivated by right of way rules, routing over time with edge priorities brings some surprises like optimal Nash equilibrium solutions with many cycles. Yet, there are also many open questions.

In Section~\ref{sec:alg}, we presented an example where the order in which the algorithm chooses the edges in $t$ has a huge impact on the quality of the equilibrium. 
Is there an optimal ordering of the edges entering $t$, i.e., can we find the best mistrustful equilibrium by determining a fixed order of these edges or do we need a varying ordering for different time steps? Further, we have observed a significant gap between the best mistrustful equilibrium and the best equilibrium. Is there an efficient algorithm that always finds the best Nash equilibrium or is this problem NP-hard? 

Since we have seen that there are games with a very high price of stability and price of anarchy, it seems to be a very interesting question to ask for the design of optimal priority lists. Given a graph, how can the priority list (global or local) which minimizes the best or the worst equilibrium be found?

Moreover, the motivation of this model comes from traffic and hence, it seems quite natural to ask for a generalization of the results to multi-commodity games. Yet, it seems unlikely that equilibria always exist in the multi-commodity case. Under what conditions can we guarantee the existence of equilibria? Do best response dynamics always converge for multi-commodity settings?

%

\bibliographystyle{plain}
\bibliography{packagerouting_bib.bib}

\end{document}